\documentclass[12pt,english,aps,prl,twocolumn,showpacs,superscriptaddress,
footinbib,reprint,noshowpacs]{revtex4-1}
\usepackage{graphicx}
\def\arXiv#1{\href{http://arxiv.org/abs/#1}{arXiv:#1}}
\usepackage{amsmath}
\usepackage{comment}
\usepackage{amssymb}
\usepackage{caption}
\captionsetup{justification=raggedright,singlelinecheck=false}
\usepackage{soul}


\RequirePackage{amsmath,amssymb,amsthm,graphicx,mathrsfs,url}
\RequirePackage[usenames,dvipsnames]{color}
\RequirePackage[colorlinks=true,linkcolor=Red,citecolor=Green]{hyperref}
\RequirePackage{amsxtra}

\newtheorem{theo}{Theorem}
\newtheorem{prop}{Proposition}

\newtheorem{lemm}[prop]{Lemma}
\newtheorem{corr}[prop]{Corollary}

\usepackage{scalerel}

\DeclareMathOperator{\Spec}{Spec}

\let\Im=\Imag

\let\Re=\Real

\newcommand{\RR}{{\mathbb R}}

\newcommand{\CC}{{\mathbb C}}

\newcommand{\ZZ}{{\mathbb Z}}

\usepackage{wasysym}

\usepackage{subcaption} 
\def\smallsection#1{\smallskip\noindent\textbf{#1}.}
\begin{document}

\title{Spectral characterization of magic angles in twisted bilayer graphene}

\author{Simon Becker}
\email{simon.becker@damtp.cam.ac.uk}
\address{Department of Applied Mathematics and Theoretical Physics, University of Cambridge, Wilberforce Road, Cambridge, CB3 0WA, United Kingdom.}

\author{Mark Embree}
\email{embree@vt.edu}
\address{Department of Mathematics, Virginia Tech, 
Blacksburg, VA 24061, USA}

\author{Jens Wittsten}
\email{jens.wittsten@math.lu.se}
\address{Centre for Mathematical Sciences, Lund University, Box 118, SE-221 00 Lund, Sweden, and Department of Engineering, University of Bor{\aa}s, SE-501 90 Bor{\aa}s, Sweden}

\author{Maciej Zworski}
\email{zworski@math.berkeley.edu}
\address{Department of Mathematics, University of California,
Berkeley, CA 94720, USA.}

\begin{abstract}
Twisted bilayer graphene (TBG) has been experimentally observed to exhibit almost flat bands when the twisting occurs at certain {\em magic angles}. In this letter 
we show that in the approximation of vanishing AA-coupling,
 the magic angles (at which there exist entirely flat bands) are given as the eigenvalues of a non-hermitian operator, and that all bands start squeezing exponentially fast as the  angle $\theta$ tends to $0$. In particular, as the interaction potential changes,
the dynamics of magic angles involves the non-physical complex
 eigenvalues. Using our new spectral characterization, we show that the equidistant scaling of inverse magic angles, as observed in \cite{TKV}, is special for the choice of tunnelling potentials in the continuum model, and is not protected by symmetries. While we also show that the protection of zero-energy states holds in the continuum model as long as particle-hole symmetry is preserved, we observe that the existence of flat bands and the exponential squeezing are special properties of the chiral model. 
\end{abstract} 

\maketitle
\makeatother

The electronic structure in a stack of two graphene sheets depends crucially on the relative twist angle $\theta$ of the layers \cite{M10,LMM10,SSKP10,MCBPB10,MK12,TMM12,MK13,WSFRSP16}. Tunneling interactions coming from long-period Moir\'e patterns (emerging from the twisting) cause the Dirac dispersion relation of the non-interacting sheets to change dramatically: the Floquet bands become nearly flat at a certain discrete set of \emph{magic angles}, as observed in \cite{BM,BM2,LPC}. This flattening of bands led to the discovery of Mott insulation \cite{Cao2} and of unconventional superconductivity in TBG \cite{Cao1,Yank3,S20,PZVS}.
 
 \begin{figure}[h!] \begin{center}
\includegraphics[width=8.5cm]{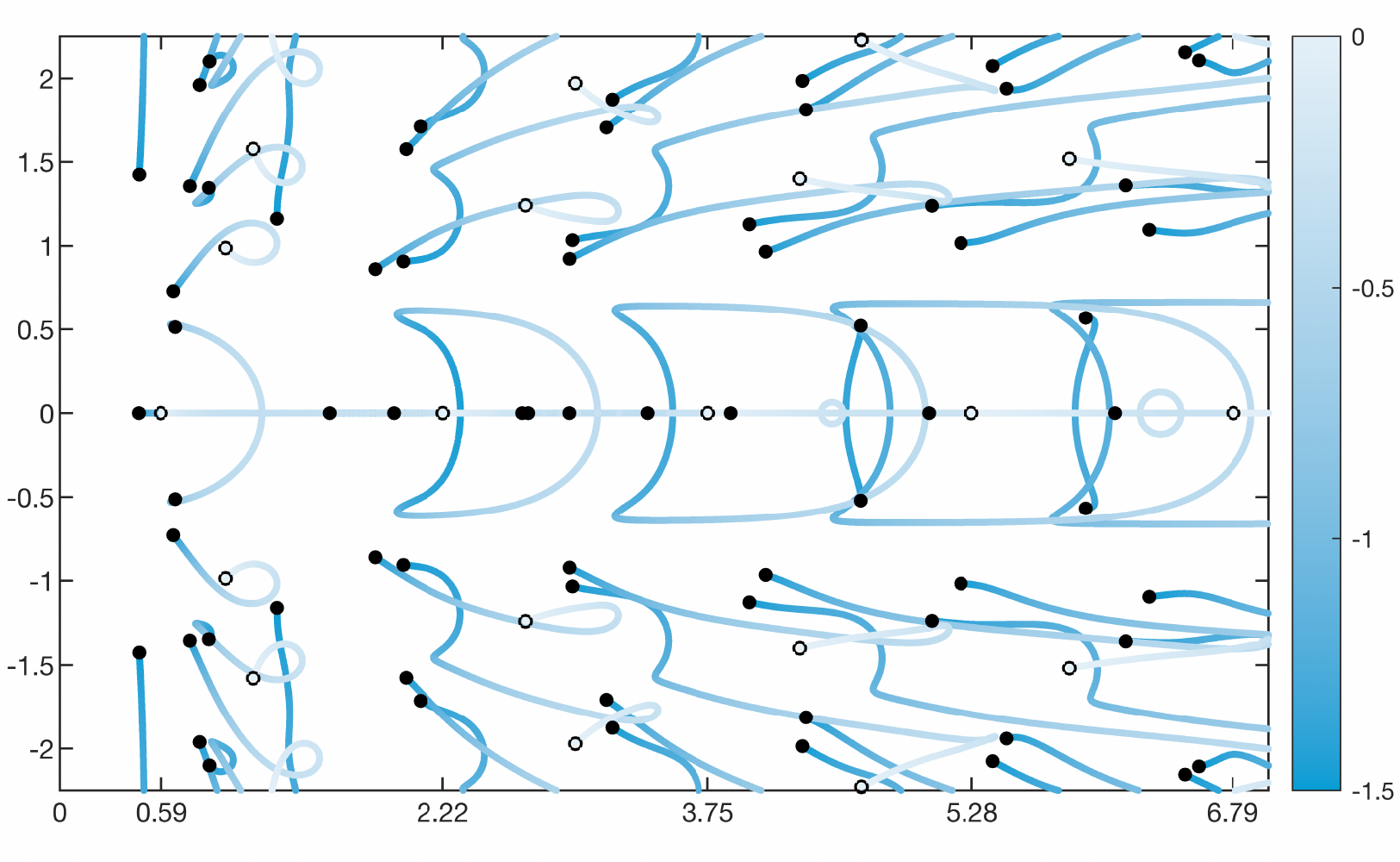}  
\begin{picture}(0,0)
\put(-19,4.5){\footnotesize $\mu$}
\end{picture}
\end{center}

\vspace*{-17pt}
\caption{For a tunnelling potential $U_{\mu}$ in \eqref{eq:tunnell} we find non-equidistant spacing of the inverse magic angles: 
$\bullet$~denote $w_1$'s for $ \mu = -1.5$; the paths trace the 
 $w_1$ dynamics for $ -1.5 \leq \mu \leq 0$; $\circ$ denote the
$w_1$'s for $\mu=0$.}\label{fig:equidistant}
\end{figure}
 
We give a new spectral characterization of magic angles for the reduced model without AA-coupling (\emph{chiral model} of \cite{TKV}) as eigenvalues of a compact non-hermitian Birman-Schwinger operator. Since its spectrum has also complex eigenvalues, we establish that flat bands also exist for complex parameters in the continuum model. 
The spectral characterization also allows an efficient numerical calculation of a large set of magic angles, as demonstrated in \cite[\S 5]{BEWZ}. Consequently, we are able to study a larger number of magic angles and find that the equidistant spacing of reciprocal angles observed in \cite{TKV} is unique to the particular tunnelling potential and not protected by symmetries -- see Fig.~\ref{fig:equidistant} and \ref{f:resa}. In addition, we show that in the chiral model the lowest bands become exponentially small as the inverse twisting angle $\theta^{-1}$ tends to infinity; see also \cite{Xu} for an experimental study of small twist angles.

\begin{figure}
\begin{center}
\includegraphics[width=8.5cm]{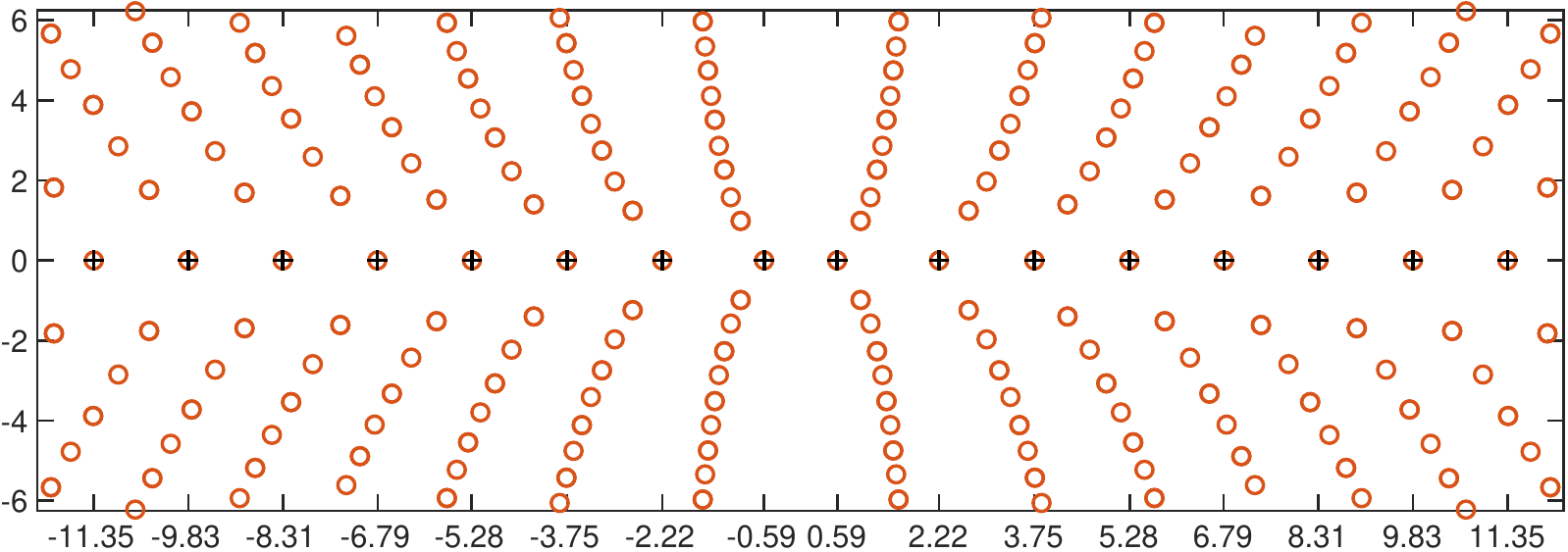}
\end{center}

\vspace*{-12pt}
\caption{\label{f:resa}
Reciprocals of magic angles for potential~\eqref{eq:defU} 
from~\cite{TKV} in the complex plane: 
resonant $ w_1$'s  ($\circ$) come from the full spectrum
of the operator \eqref{eq:BSop} defining magic angles,
and the (real-valued) magic $ w_1$'s ($+$) are the reciprocals
of the \emph{physically relevant} positive angles.}
\end{figure}

We then analyze the particle-hole symmetric continuum model with additional AA-coupling. Based on a numerical analysis of this model and a rigorous argument for the extremal model with only AA-coupling (\emph{anti-chiral model} of \cite{TKV}), we find that, unlike in the chiral model, there are no entirely flat bands at zero energy, and bands are also no longer squeezed with an exponential rate. However, we show that the zero energy level of the model with particle-hole symmetry, is still protected in the spectrum, which has been previously observed in \cite{TKV} for the chiral model.


The Hamiltonian of the continuum model for twisted bilayer graphene is defined for variables 
\[ z=x_1+ix_2, \ \  D_{\bar z} = \tfrac{1}{2i} (\partial_{x_1}+i \partial_{x_2}) , \ \  \omega=e^{2\pi i/3}, \]
coupling parameters $w=(w_0,w_1)$ and twisting angles $\varphi$ of the Moir\'e Dirac cones, which we allow to vary independently from the mechanical twisting angle $\theta$, by
\begin{equation}
\label{eq:Hamiltonian}
 H ( w, \varphi ) := \begin{pmatrix} 
C ( w_0 ) & D( w_1 , \varphi )^*  \\
D( w_1, \varphi )  & C ( w_0 ) \end{pmatrix}, 
\end{equation}
where
\begin{equation}
\begin{split}
\label{eq:D}
D(w_1, \varphi )  &:= \begin{pmatrix} 2 e^{i\varphi/2} D_{\bar z} & w_1 U(z) \\ w_1 U(-z) & 2e^{-i\varphi/2}D_{\bar z} \end{pmatrix} , \\ C(w_0 )&:=\begin{pmatrix}0 & w_0 V( z )  \\  w_0 V ( - z ) & 0 \end{pmatrix}. 
\end{split}
\end{equation}
Here, the tunnelling potentials \cite{BM,TKV} are given by
\begin{equation} 
\label{eq:defU} U(z) := \sum_{k=0}^2 \omega^k e^{\frac{1}{2}(z\bar \omega^k- \bar z \omega^k)}, \text{ and }
 V(z) := 2 \partial_z U ( z ). 
 \end{equation}
The parameter $w_0$ controls the AA-coupling, whereas $w_1$ controls AB- and BA-coupling. By switching off the AA-coupling, $w_0=0$, it has been noticed in \cite{TKV} that the spectrum of the Hamiltonian becomes independent of the the prefactors $\theta$ and the parameter $w_1^{-1}$ is proportional to $\theta$. The potential $U$ satisfies three key symmetry properties:
\begin{subequations}\label{eq:symmU}
\vspace*{-12pt}
\begin{equation}
 U ( z + \tfrac 43 \pi i \omega^\ell ) = \bar \omega U ( z ), \qquad \mbox{for $\ell=1,2$}, 
\end{equation}
\vspace*{-22pt}
\begin{equation}
 U ( \omega z ) = \omega U ( z ) ,\qquad  \overline{ U (\bar z ) } = U ( z ),
 \end{equation}
\end{subequations}
and most of the results here apply to more general potentials satisfying \eqref{eq:symmU}, for instance \eqref{eq:tunnell}. Such potentials are discussed in the supplementary material, see \eqref{eq:gen_pot}.
  
It is immediate from \eqref{eq:symmU} that the Hamiltonian is periodic with respect to the lattice $\Gamma:=4\pi (i \omega \mathbb{Z} \oplus i \omega^2 \mathbb{Z})$, and by Floquet theory magic angles are defined as angles $\theta= w_1^{-1}$ such that the energy $0$ is in the spectrum 
(on $ \Gamma$-periodic functions) of all Hamiltonians $H_{\mathbf k}(w,\varphi)$ where 
\begin{equation}\label{eq:Hk}
H_{\mathbf k} ( w, \varphi ) := \begin{pmatrix} 
C(w_0) & D( w_1 , \varphi )^*-\overline{\mathbf k}  \\
D( w_1, \varphi )- \mathbf k  & C(w_0) \end{pmatrix}
\end{equation}
where $\mathbf k \in \mathbb C$ denotes the quasi-momentum.

The Hamiltonian with only AB/BA-coupling, $w_0=\varphi=0,$ is called the \emph{chiral} continuum model, the opposite choice of only $AA$-coupling, $w_1=\varphi=0$, is called the \emph{anti-chiral} continuum model, and the choice $\varphi=0$ the \emph{particle-hole symmetric} continuum model.

The Mott insulation, experimentally observed in TBG \cite{Cao2}, is due to strongly correlated electron interactions~\cite{MP37}; the exponential squeezing of bands described here provides a qualitative explanation for said effect, as kinetic energy in squeezed bands and electronic interactions dominate \cite{BJ}. In addition, it has been observed \cite{S20,Yank3} that the correlated insulating and superconducting regimes are not limited to magic angles but persist also at angles close to the magic ones, as long as bands remain narrow. This suggests that the physically relevant phenomenon is the squeezing and flattening of bands, rather than the existence of an entirely flat band (which is seemingly unstable under perturbations).

\smallsection{Spectral characterization of magic angles} To obtain a spectral characterization of magic angles for the chiral Hamiltonian, we define the \emph{Birman-Schwinger operator} for $\mathbf k \notin \Gamma^*$ (the dual lattice) by
\begin{equation}
\label{eq:BSop}
T_{\mathbf k } := ( 2 D_{\bar z } - \mathbf k )^{-1} \begin{pmatrix}
0 & U ( z ) \\ U ( - z ) & 0 \end{pmatrix}.
\end{equation}
For zero AA-coupling, there is a discrete set of values $w_1 \in \mathcal A := 1/  \Spec ( T_{\mathbf k } ) $, where $\Spec$ denotes the spectrum
(always on $ \Gamma$-periodic functions), that is, quite remarkably, independent of $\mathbf k \notin \Gamma^*$ \cite[Theorem 2]{BEWZ}. In addition, the set $\mathcal A$ satisfies symmetries $\mathcal A = - \mathcal A = \overline {\mathcal A }$ \cite[Proposition 3.2]{BEWZ}; see Figs.~\ref{fig:equidistant} and \ref{f:resa}. 

The flat bands at zero of the chiral Hamiltonian then occur precisely at $w_1 \in \mathcal A$. The set of magic angles is then just given by angles $\theta$ for which $\theta^{-1} \in \mathcal A \cap \RR,$ see \cite[Theorem~2]{BEWZ} for details.

In addition, we can characterize the magic angles for the chiral model by analyzing the spectrum of the non-hermitian operator $D(w_1)=D(w_1,0)$ in \eqref{eq:D}. In fact, $\theta=w_1^{-1}$ is a magic angle if and only if $\Spec(D(w_1))=\CC.$ For all other $\theta$, $\Spec(D(w_1))=\Gamma^*$, the dual lattice~\cite[Theorem~2]{BEWZ}. Fig.~\ref{f:psa} illustrates this remarkable discontinuity in the spectrum of $D(w_1)$.
\begin{figure}
\hspace*{-10pt}
\includegraphics[width=4.65cm]{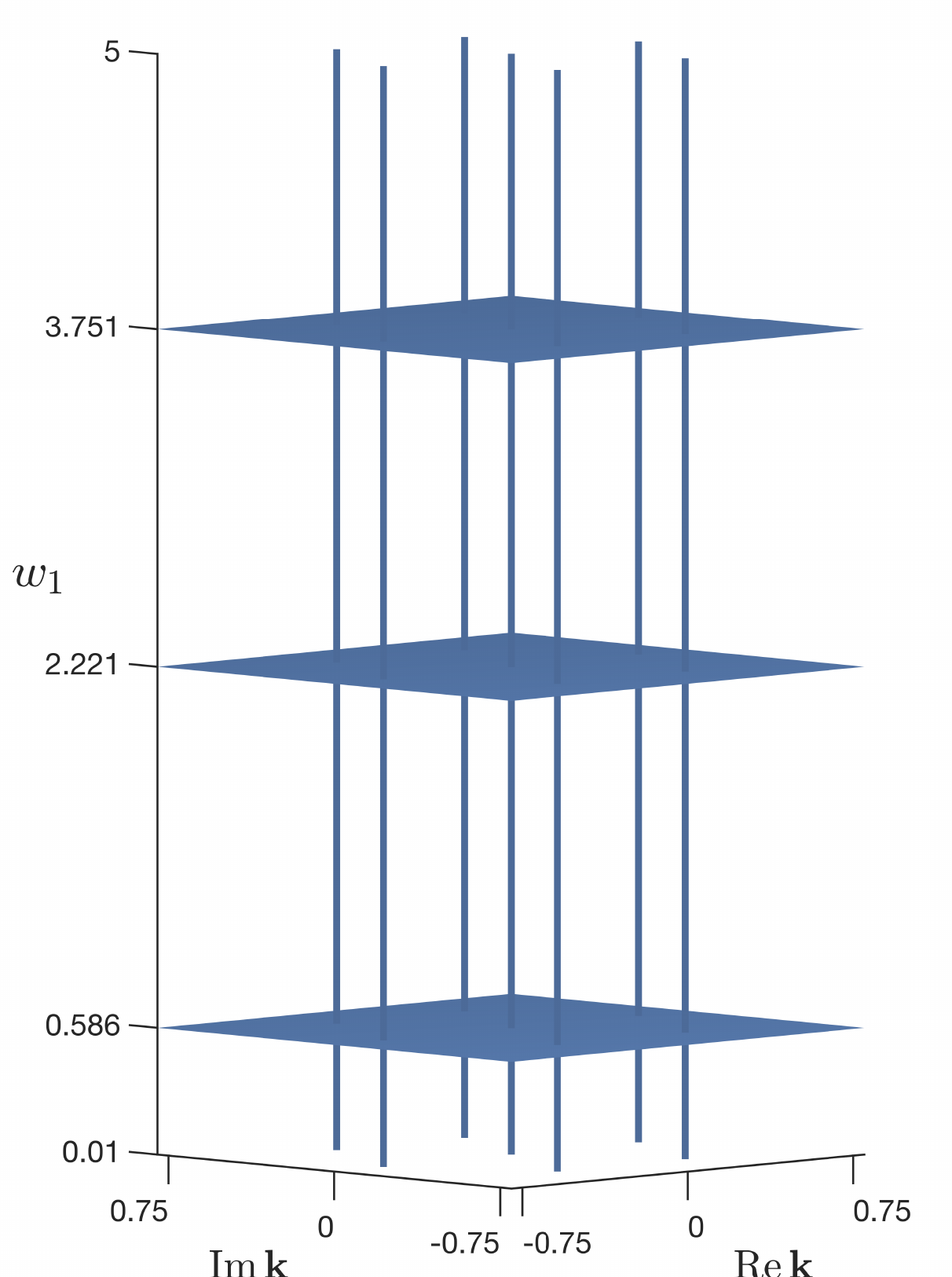}
\hspace*{-25pt}
\includegraphics[width=4.65cm]{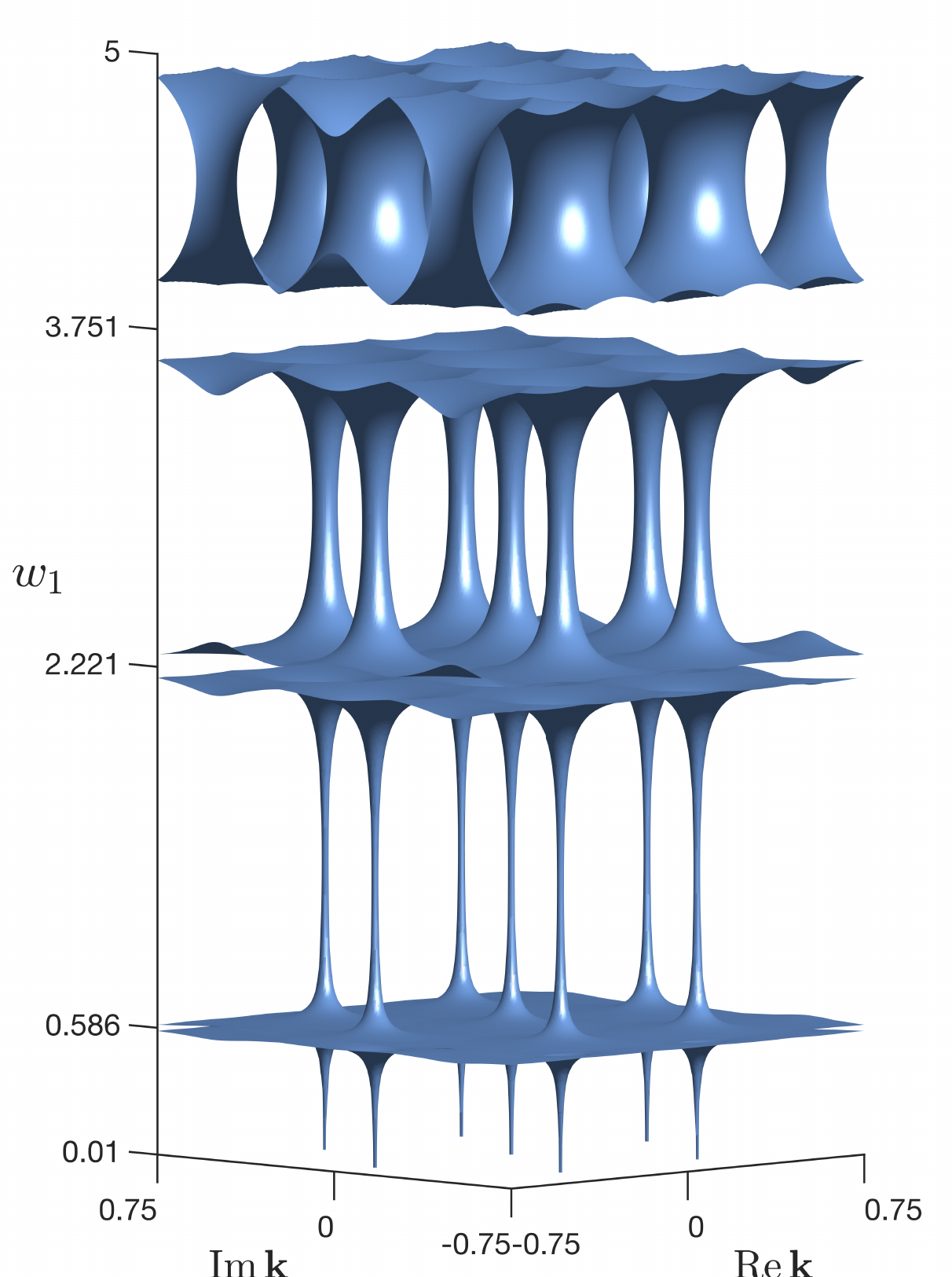} 
\caption{\label{f:psa}: Left panel: the spectrum of $ D (w_1 ) $ 
(in the $ \mathbf k $ plane) as $w_1$ varies (vertical axis). Flat surfaces indicate that $1/w_1$ is a magic angle. 
Right panel: level surface of $ \| ( D ( w_1 ) - \mathbf k )^{-1} \| = 10^{\frac 94}$ as a function of
$ \mathbf k $ and
 $w_1$:
the norm blows up at magic angles for all $ {\bf k} $ ($w_1$ near $ 0.586$,  $2.221$, and $3.751$).  
The thickening of the ``trunks'' reflects the exponential squeezing
of the smallest eigenvalue of $H_{\bf k}((0,w_1),0)$ as $w_1$ grows,
relating to {\em pseudospectral} effects \cite{dsz,TE} -- see \eqref{eq:sque} and the discussion that follows.}
\end{figure}

\smallsection{Asymptotic distribution of magic angles} It has been observed in \cite{TKV} that the differences of reciprocal magic angles, with $0 <\theta_k <\theta_{k-1}$, for the chiral model behave asymptotically like $\theta_k^{-1}-\theta_{k-1}^{-1} \simeq 3/2.$
Our new characterization of magic angles as real eigenvalues of the operator $T_{\mathbf k }$ can give magic angles to high precision: we have high confidence of all the digits shown:
\begin{center}
\begin{tabular}{rclcc}
\multicolumn{1}{c}{$k$} & &
\multicolumn{1}{c}{$\theta_k^{-1}$} &
 & $\theta_{k}^{-1}-\theta_{k-1}^{-1} $ \\[2pt] \hline
1  &\ &   \phantom{0}0.58566355838955 &\ &               \\
2  &&   \phantom{0}2.2211821738201  && 1.6355        \\
3  &&   \phantom{0}3.7514055099052  && 1.5302        \\
4  &&   \phantom{0}5.276497782985   && 1.5251        \\
5  &&   \phantom{0}6.79478505720    && 1.5183        \\
6  &&   \phantom{0}8.3129991933     && 1.5182        \\
7  &&   \phantom{0}9.829066969      && 1.5161        \\
8  &&    11.34534068       && 1.5163        \\
9  &&    12.8606086        && 1.5153        \\
10 &&    14.376072         && 1.5155        \\
11 &&    15.89096          && 1.5149        \\
12 &&    17.4060           && 1.5150        \\
13 &&    18.920            && 1.5147        \\
\end{tabular}
\end{center}
This  gives a refined asymptotic behaviour (see \cite[\S 5]{BEWZ})
\begin{equation}
\label{eq:asymptotic}
\theta_{k}^{-1} - \theta_{k-1}^{-1}  \simeq 1.515, \ \ k \leq 13 .
\end{equation}
This calls to question a recently suggested WKB approach \cite{RGMN}: the explanation proposed there gives the asymptotic spacing of magic angles  as $\theta_{ k}^{-1} - \theta_{k-1 }^{-1}  \simeq 1.47.$
In addition, the approach proposed in \cite{RGMN} would seemingly apply to all tunnelling potentials satisfying \eqref{eq:symmU} in the chiral model. However, the equidistant asymptotic distribution of magic angles is not stable under such perturbations, as we discuss in our next paragraph.

\smallsection{Lattice relaxations} It has been proposed in \cite{WG} that to take lattice relaxation effects into account, it is necessary to consider more general potentials in the continuum model. 
We therefore consider the simplest generalization of the tunnelling potential $U$  \eqref{eq:defU} that still satisfies \eqref{eq:symmU} \begin{equation}
\label{eq:tunnell}
  U_\mu ( z ) := U ( z ) 
+  \mu  \sum_{k=0}^2 \omega^k e^{\bar z \omega^k-z\bar \omega^k} , \ \ 
\mu \in \RR . 
\end{equation}

In this case, flat bands are still given as the eigenvalues of the respective Birman-Schwinger operator \eqref{eq:BSop}, with $U$ replaced by $U_{\mu}.$ However, the equidistant spacing of magic angles is no longer visible. See Fig.~\ref{fig:equidistant}, which also indicates that to understand the dynamics of $ \theta $ as 
$ \mu $ varies, complex values must be considered. If we 
abandon the {\em reality} requirement that $ \overline{ U ( \bar z ) } 
= U (z ) $ in \eqref{eq:symmU}, a generic $ U $ will not have any {\em real} magic angles.

\begin{figure}
\begin{center}
\includegraphics[width=7.5cm]{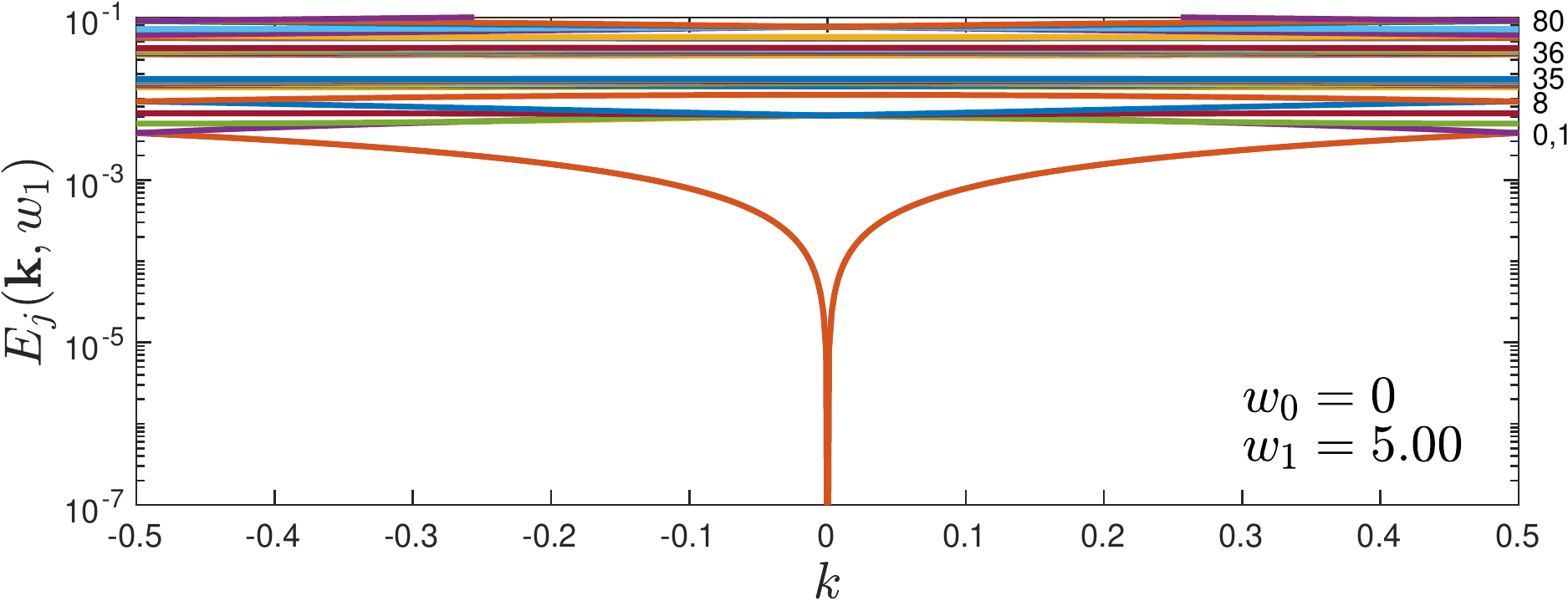}\\

\includegraphics[width=7.5cm]{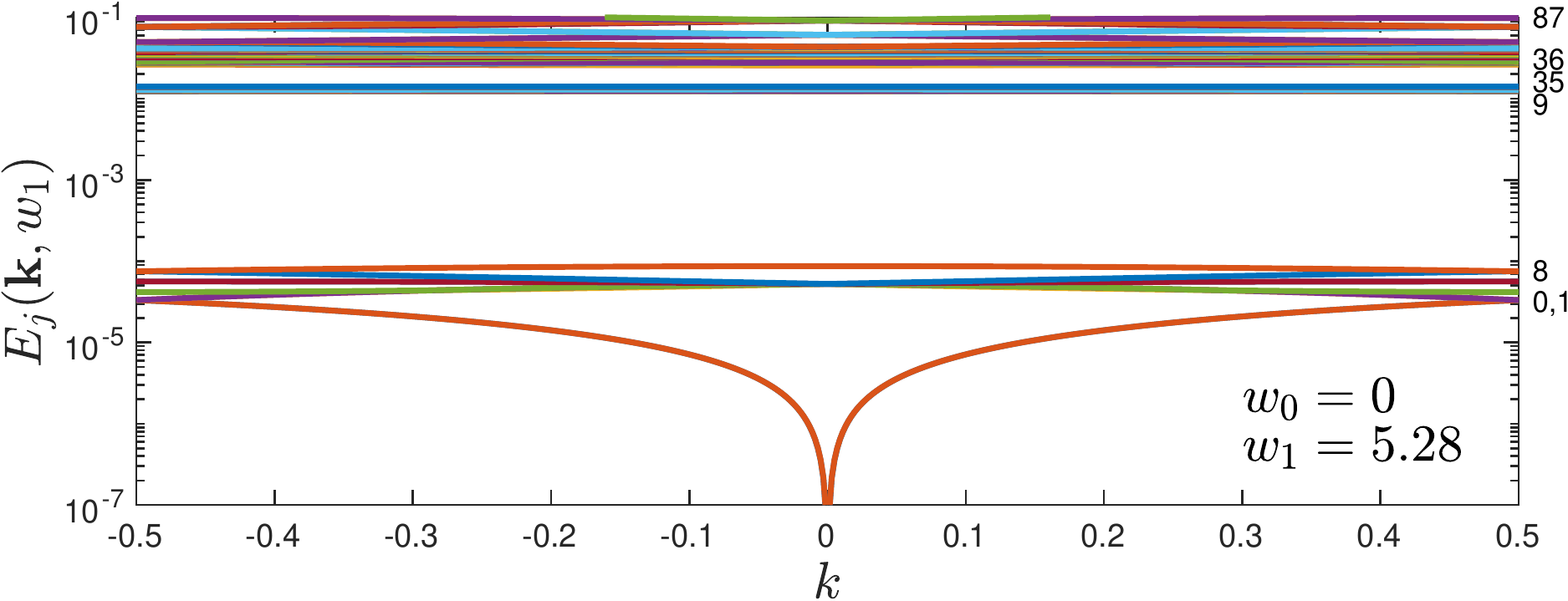}\\

\vspace*{5pt}
\hspace*{-5pt}\includegraphics[width=7.5cm]{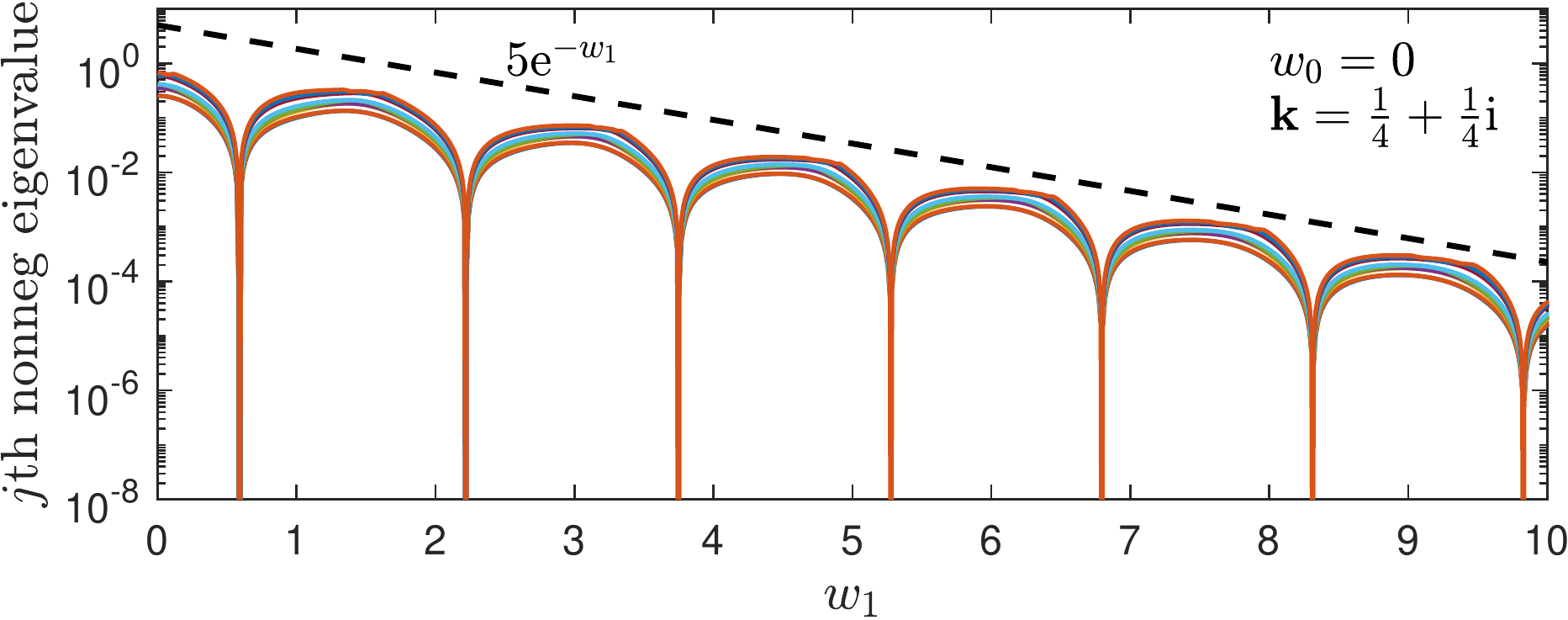}
\end{center}

\vspace*{-15pt}
\caption{\label{f:bands} Top: the smallest non-negative eigenvalues of $ H_{\mathbf k } ( 
(0,w_1),0 )$, $ w_1 = 5$, $ \mathbf k = k \omega/\sqrt 3$, $ -\frac12  \leq k \leq 
\frac12 $. (Selected $j$ values are given in the right margin.)
Middle:  The same, but with $w_1=5.28$ nearer the fourth magic $w_1$ value, showing the marked decrease of the nine lowest bands. Bottom: $E_j({\mathbf k},w_1)$ (log scale) for ${\mathbf k} = {1\over 4} + {1\over4}i$ and $0\le j\le 8$. We see \eqref{eq:sque} with $ c_1 = 1 $.}
\end{figure}
\smallsection{Point-localized states and exponential squeezing of bands} 
From \eqref{eq:asymptotic} we see that magic angles in the chiral model accumulate close to the zero twisting angle. However, aside from an accumulation of magic angles as $\theta \downarrow 0$, we also discover an exponential squeezing of eigenvalues of the chiral Floquet Hamiltonian $ H_{\mathbf k} ( (0,w_1), \varphi)$, 
$ \varphi \leq C/w_1 $, to zero. In the chiral model the low-lying bands for small angles become asymptotically flat:
if $ \{  E_j ( \mathbf k , w_1 ) \}_{ j =0}^{\infty} = 
\Spec ( H_{\mathbf k} ( (0,w_1),\varphi) ) \cap [ 0 , \infty ) $,
$ E_{j+1} \geq E_j $, then there are constants $c_0,c_1,c_2>0$ such that 
\begin{equation}
\label{eq:sque}  | E_j ( \mathbf k, w_1) | \leq c_0 e^{ - c_1w_1 } , \ \ \ \ 
 j  \leq c_2 w_1 , 
\end{equation}  see Fig.~\ref{f:bands}. In fact, numerical results here suggest that
$ c_1 = 1 $ and $ c_2 $ can be taken arbitrarily large.

\begin{figure}
\begin{center}
\includegraphics[width=4cm]{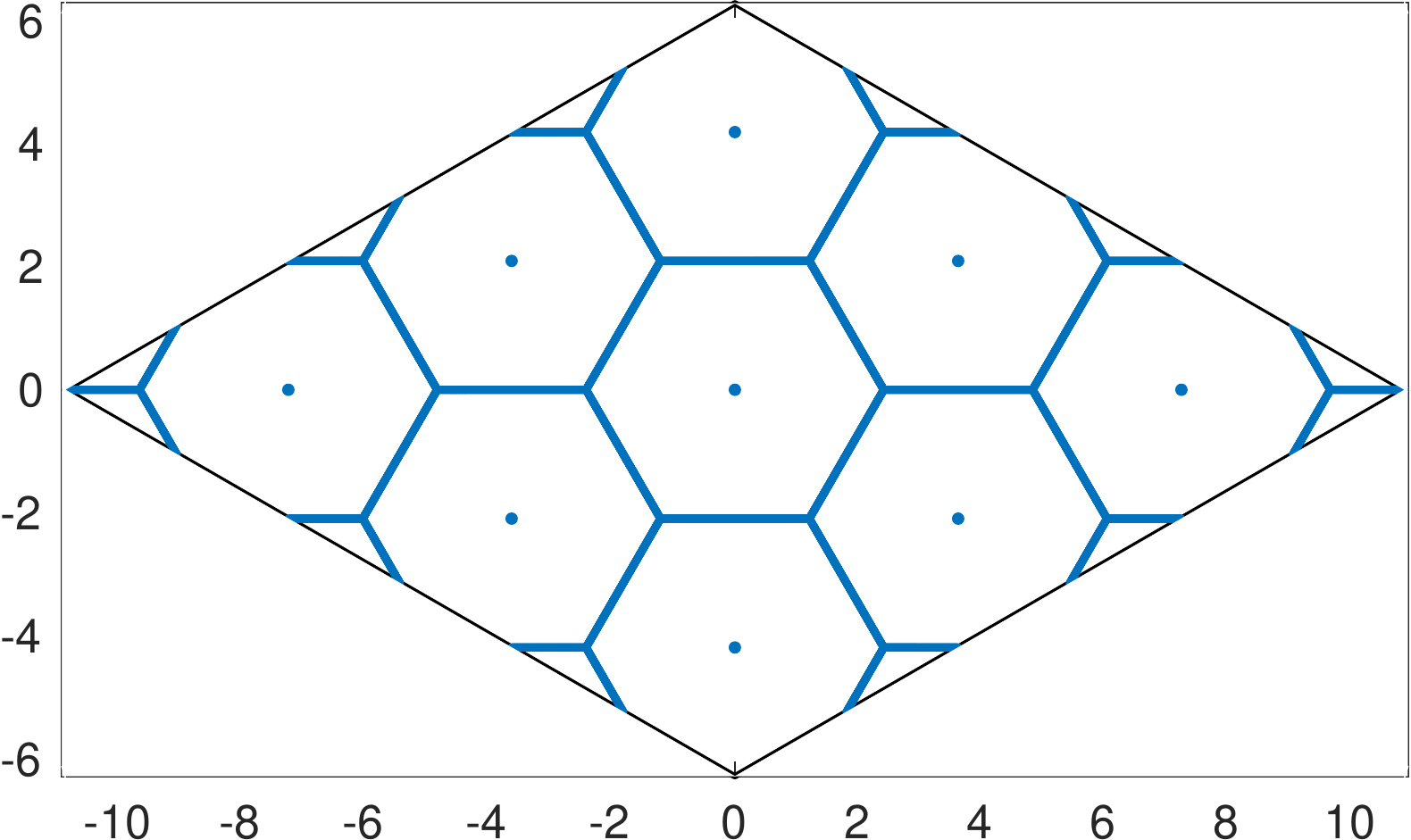} \quad 
\includegraphics[width=4cm]{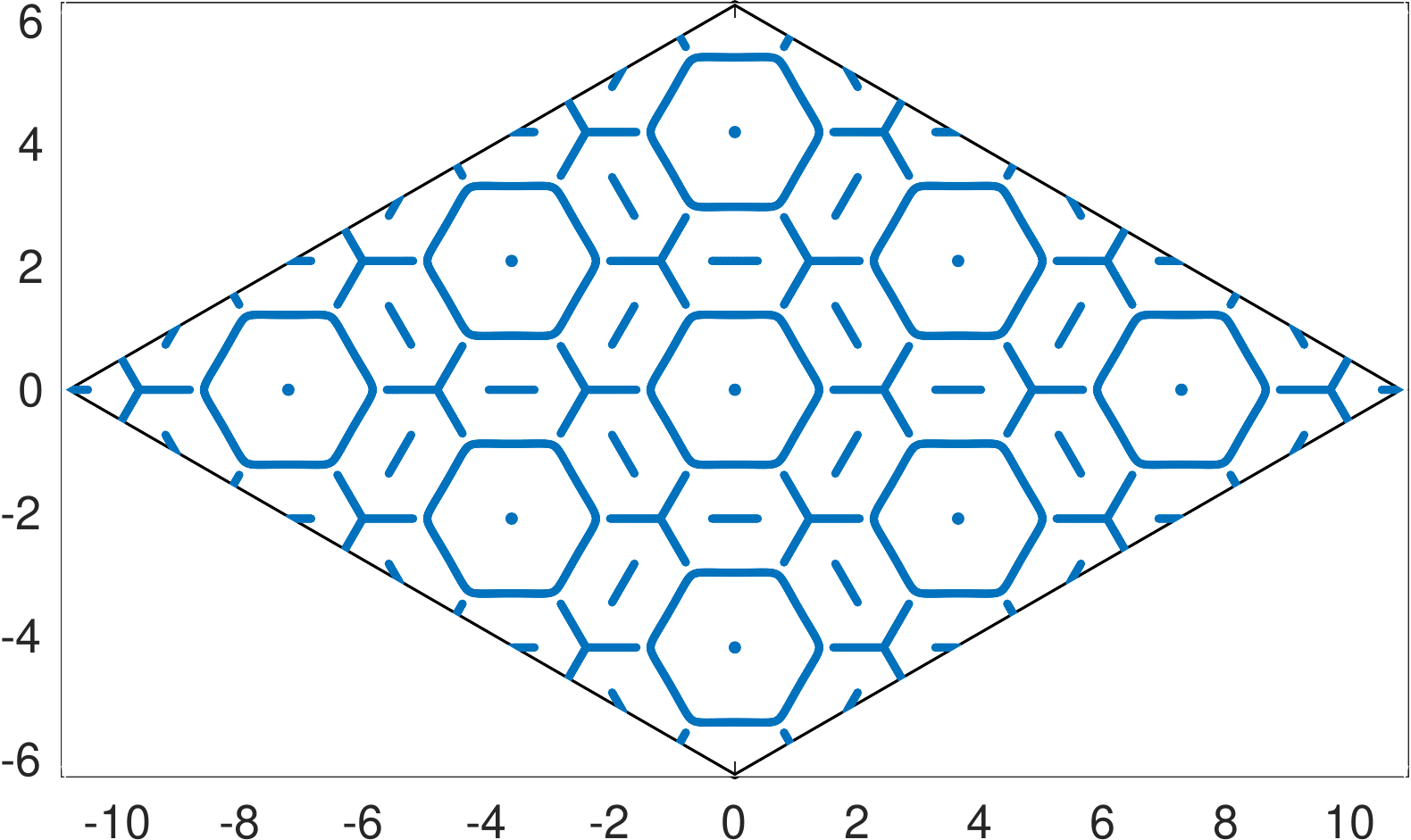}
\end{center}
\caption{\label{f:Horm} {Plots showing where the bracket $ i\{ q , \bar q \}$ is zero, 
where $q$ is the determinant of the symbol \eqref{eq:principalsymbol} 
with $U$ given by \eqref{eq:defU} (left) and by \eqref{eq:tunnell} with $\mu=-2$ (right).
The bracket is nonzero, except on a one-dim\-en\-sional graph and on a set of points.}}
\end{figure}

To understand the squeezing of low-lying eigenvalues theoretically, we use a semiclassical reformulation of the Hamiltonian of the chiral model. To study the nullspace of $ H_{\mathbf k} ( (0,w_1),0),$ it suffices to study effective 2-by-2 operators $D(w_1,0)$ in \eqref{eq:D}.  The principal Weyl symbol of $D(w_1,0)$ is given in terms of phase space variables $ z = x_1 + i x_2$ and $ 2  \zeta = \xi_1 - i \xi_2 $ by   
\begin{equation}\label{eq:principalsymbol}
\sigma(D(w_1,0))(z, \bar z, \bar \zeta)  = \begin{pmatrix} 2\bar \zeta & w_1 U(z) \\ w_1 U(-z) & 2\bar \zeta \end{pmatrix}. 
\end{equation}
\color{black}
Let $q$ be the determinant of this symbol. Then at points $(z^0,\zeta^0)$ in phase space at which $q (z^0,\zeta^0) = 0 $ and $\zeta^0 \neq 0$, we then find that the Poisson-bracket
$\{ q , \bar q \} (z^0,\zeta^0) \neq 0$ is non-vanishing.

An adaptation of H\"ormander's bracket condition to the analytic case \cite[Theorem 1.2]{dsz} implies that for $ z^0 $ away from the set
shown in 
Fig.~\ref{f:Horm}, 
 there exist smooth functions $v_h$ such that $|  D(w_1,h)  v_h (z ) | \leq  e^{ - \frac c  h } $ ($ h \leq C/w_1 $), which are localized to $z^0$, i.e., $v_h (z^0) = 1$ and $ |  v_h ( z ) | \leq  e^{ - \frac{c}{h} | z- z^0|^2 },$ see \cite[Theorem 3]{BEWZ}.

However, the above argument does not apply to the Hamiltonian of the full continuum model \eqref{eq:Hamiltonian} while in the anti-chiral model
($ w_1 = 0 $) there are no localized modes at $z^0 $ with $ U ( z^0 ) U ( -z^0 ) \neq 0 $ -- see Theorem \ref{theo:Jens} in the supplementary material.
This is confirmed by numerics:  
 Fig.~\ref{fig:nosqueeze} shows that there is no exponential squeezing of bands.

\begin{figure}

\hspace*{-2em}\includegraphics[width=8.5cm]{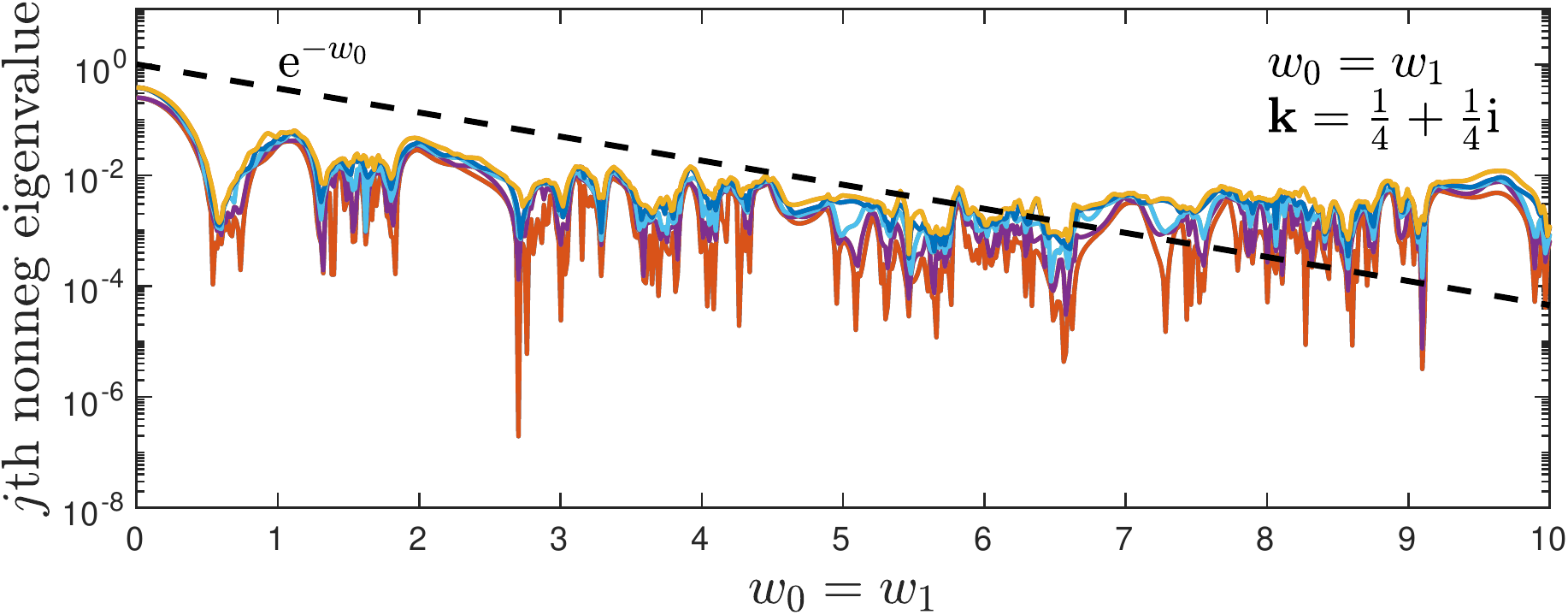}

\vspace*{.5em}
\hspace*{-2em}\includegraphics[width=8.5cm]{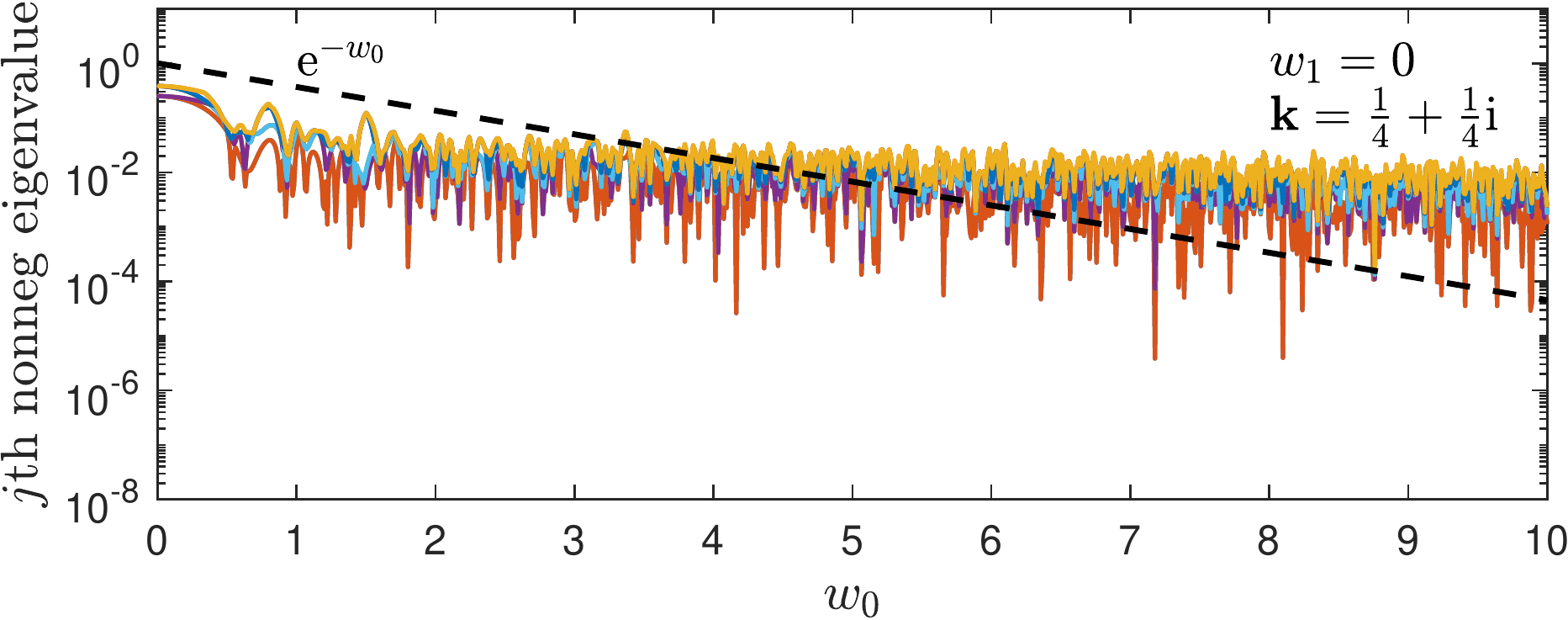}

\caption{Top:  For $ w_0 = w_1 \gg 1 $, $ \varphi =0 $,
no ex\-ponen\-tial rate of squeezing ($1 \le j \le 10$). The bands show a squeezing effect near $ w_0=w_1 \simeq 0.586 $. Bottom: no squeezing
for $ w_1 = 0$ and $ w_0\gg 1 $ consistent with the absence of localized modes and perfectly flat bands.}
\label{fig:nosqueeze}
\end{figure}

\smallsection{Symmetries and protected states at zero} Magic angles in the chiral continuum model are the reciprocals of coupling constants $w_1$ for which $0$ is in the spectrum of all $H_{\mathbf k}((0,w_1),0)$. 

We show that the symmetries of the continuum model with particle-hole symmetry imply that the zero energy level is protected in the spectrum of the Hamiltonian for all $w=(w_0,w_1).$

We start by stating the most straightforward symmetries:  translational symmetry
\begin{equation}
 {\mathscr L}_{\mathbf a } :=  \mathscr U  L_{\mathbf a } , \ \ \
 \mathscr U  := \operatorname{diag}( \omega,  1 , \omega , 1), \ \ 
\mathbf a = \tfrac{ 4}3 \pi i \omega^\ell , \end{equation}
and rotational $C_3$-symmetry 
\[ \mathscr C \mathbf u ( z) = {\rm{diag}} ( I_{\CC^2}  , \bar \omega I_{\CC^2} ) 
\mathbf u ( \omega z ).
 \]
From $ \mathscr C \mathscr L_{\mathbf a } = \mathscr L_{\bar \omega \mathbf a } \mathscr C$, we can construct an action, for $\Gamma_3 = (\Gamma/3)/\Gamma$, of the Heisenberg group over $\ZZ_3$:
\begin{equation}
\begin{gathered}  G := \Gamma_3 \rtimes \ZZ_3 ,  \ \ \ ( \mathbf a , k ) \cdot ( \mathbf a' , \ell ) = 
( \mathbf a + \bar \omega \mathbf a' , k + \ell ) ,
\\  ( \mathbf a, \ell ) \cdot \mathbf u := 
\mathscr L_{\mathbf a } \mathscr C^\ell \mathbf u . 
\end{gathered}
\end{equation}
Since there are eleven irreducible representations, see \cite[\S 2.2]{BEWZ}, we can decompose $L^2(\CC/\Gamma; \CC^4)$ into eleven orthogonal subspaces. For $(k,p) \in \ZZ_3^2$ we have that $9$ of these spaces, $L^2_{k,p}$, are characterized by the action $\mathscr L_{\mathbf a }|_{L^2_{ k , p }} = 
\omega^{ k ( a_1 + a_2 ) } $ and $\mathscr C|_{L^2_{ k , p }} = \bar \omega^p.$
It is then a simple observation that for $w=(w_0,w_1)=0$ we have that $H(w,0)e_i=0$ for $i=1,\ldots,4$ and
$e_1 \in L^2_{1,0}$, $e_2 \in L^2_{0,0}$, $e_3 \in L^2_{1,1}$, $e_4 \in L^2_{0,1}$
are all in different subspaces. To see that these elements are protected by symmetries, we require more subtle symmetries, see \cite{K19,L19} which we discuss in detail in the supplementary material: Mirror symmetry $\mathscr M : L^2_{k,p} \to L^2_{-k+1,-p+1}$, Lemma \ref{l:mir} 
\[ \mathscr M \mathbf u ( z ) := \begin{pmatrix} 0 & \sigma_1 \\ \sigma_1 & 0 \end{pmatrix} \mathbf u ( \bar z ),  
\] 
and $\mathcal P \mathcal T$-symmetry $\mathcal P \mathcal T : L^2_{k,p} \to L^2_{k,-p+1}$, Lemma \ref{lemm:PT}
\[ \mathcal P \mathcal T \mathbf u ( z ) = \begin{pmatrix} 0 & I_{\CC^2} \\
\ I_{\CC^2} & 0 \end{pmatrix} \overline{\mathbf u ( - z )}.
\]
All of the above symmetries commute with $  H ( w, \varphi)  $, also for $\varphi \neq 0$. 
Only when setting $\varphi=0$, the Hamiltonian exhibits in addition particle-hole symmetry $\mathscr S: L^2 _{k,p} \to  L^2_{-k+1,p} $, such that $\mathscr S H(w,0) = - H(w,0) \mathscr S$, where
\[  \mathscr S \mathbf u ( z ) = \begin{pmatrix} \sigma_2 & 0 \\
0  & \sigma_2 \end{pmatrix} \mathbf u(-z).\]

From the application of the last three symmetries, we find
$ L^2 _{k,p} \xrightarrow{ \mathscr S } L^2_{-k+1,p} \xrightarrow{ \mathscr M } L^2_{k,-p+1}  \xrightarrow{ \mathcal P \mathcal T } L^2_{k,p}$, see Lemma \ref{l:chir},
which shows that $\Spec_{L^2_{k,p} } ( H  ) = - \Spec_{L^2_{k,p}. 
}( H  ).$
Recalling the kernel for $ w = ( 0, 0 ) $, we conclude that $\ker_{ {L^2_{k,p} }} H( w, 0 ) \neq \{ 0 \} $,
$ k,p \in \{ 0  , 1 \}$, $ w \in \RR^2$.

\smallsection{Absence of flat bands} Unlike for the chiral model, which exhibits infinitely many flat bands, perfectly flat bands at zero are absent once AA-coupling is switched on. To understand this in the extremal anti-chiral case, we follow an idea of Thomas~\cite{thomas} for Schr\"odinger operators. 
First, observe that $0 \in \Spec(H_{\mathbf{k}})$ is 
equivalent to $\ker  Q_{\mathbf k}  \oplus \ker  Q_{\mathbf k}^*  \neq \{ 0 \}$
(see Theorem~\ref{theo:noflatbands} for details), where
\[ Q_{\mathbf k}(w_0):=
\begin{pmatrix}
  w_0e^{i  \varphi/2} V(z,\bar z)  & (2D_{z} +\mathbf{ \bar k})    \\
 (2D_{\bar z} +\mathbf{ k}) & w_0 e^{i  \varphi/2}\overline{ V(z,\bar z)}
\end{pmatrix}.\]
By squaring the operators, we find the identity
 \begin{equation}
\begin{split}
\label{eq:identity}
 Q_{\mathbf k}( Q_{\mathbf k}  + V_{11}) &= ( D_x + (k_1 ,k_2 ) )^2 I_{\CC^2}  + V_{12}   \\
 Q_{\mathbf k}^* ( Q_{\mathbf k}^* + V_{21} )  &=  ( D_x + (k_1 ,k_2 ) )^2 I_{\CC^2}  + {V}_{22},  
 \end{split}
\end{equation}
for the momentum operator $D_x = \tfrac 1 i ( \partial_{x_1} , \partial_{x_2} ) $ and auxiliary potentials $V_{ij} = V_{ij}  (\beta ).$ If we then \emph{complexify} the quasi-momentum $k_1$, self-adjointness of the momentum operator implies 
$(( D_x + (k_1 ,k_2 ))^2 )^{-1} = \mathcal O_{L^2 \to L^2 } ( |\Im k_1|^{-2} )$ such that by \eqref{eq:identity} $\ker Q_{\mathbf k} 
\oplus \ker Q_{\mathbf k}^* = \{ 0 \} $ for $\ | \Im k_1 | \gg 1.$ Thus, if there was a flat band such that $0 \in \Spec(H_{\mathbf k}) , \ \ (k_1, k_2) \in 
\RR^2$ this would imply that $0 \in \Spec(H_{\mathbf k}),$ $\mathbf k= (k_1, k_2) \in \CC \times \RR$ and thus
$\ker Q_{\mathbf k} 
\oplus \ker Q_{\mathbf k}^* \neq \{ 0 \}$ $(k_1, k_2) \in \CC \times \RR$ which is impossible. Details are provided in Theorem \ref{theo:noflatbands} of the supplementary materials.

\smallsection{Acknowledgements} We would like to thank Mike Zaletel for bringing \cite{TKV} to our 
attention and Alexis Drouot, Grisha Tarnopolsky and 
Ashvin Vishwanath for helpful discussions.
SB~gratefully acknowledges support by
the UK Engineering and Physical Sciences Research Council (EPSRC)
grant EP/L016516/1 for the University of Cambridge Centre for Doctoral
Training, the Cambridge Centre for Analysis. ME and MZ were partially 
supported by the National Science Foundation under the grants DMS-1720257 and DMS-1901462, respectively. 
JW was partially supported by the Swedish Research Council grants 2015-03780 and 2019-04878.

\clearpage

\onecolumngrid

\section{Supplementary Material}

We provide additional mathematical details on four parts that are only sketched in the letter.
\begin{itemize}
\item A derivation of the continuum model \eqref{eq:Hamiltonian} in the article.
\item A careful and extensive analysis of symmetries and protected states of the Hamiltonian \eqref{eq:Hamiltonian} of the continuum model.
\item The non-existence of flat bands in the anti-chiral model.
\item No exponential squeezing in the anti-chiral model.
\end{itemize}

\section{Continuum model}
Let 
\begin{equation}
\begin{split}
\sigma_0 &= \begin{pmatrix} 1 & 0 \\ 0 & 1 \end{pmatrix}, \quad \sigma_1 = \begin{pmatrix} 0 & 1 \\ 1 & 0 \end{pmatrix}, \quad
\sigma_2 = \begin{pmatrix} 0 & -i \\ i & 0 \end{pmatrix}, \quad \sigma_3 = \begin{pmatrix} 1 & 0 \\0  & -1 \end{pmatrix}
\end{split}
\end{equation}
be the Pauli matrices. The continuum model for a single valley of twisted bilayer graphene is described by an effective Hamiltonian on $\RR^2$,
\begin{equation}
\begin{split}
\mathcal H'(\alpha,\varphi)=\begin{pmatrix} - i \sigma_{\varphi/2}\nabla & T(k_{\theta}x) \\ T(k_{\theta}x)^* & - i \sigma_{-\varphi/2}\nabla  \end{pmatrix},
\end{split}
\end{equation}
with $\nabla = (\partial_{x_1}, \partial_{x_2})$ and $\sigma_{\varphi/2} = e^{-i \frac{\varphi}{4} \sigma_3} (\sigma_1, \sigma_2) e^{i \frac{\varphi}{4} \sigma_3}$ and tunneling potentials
\[T(x, \alpha) = \sum_{j=1}^3 T_j(\alpha) e^{-i  q_j \dot (x_1,x_2)^t } \]
with $q_1 = (0,-1), q_{2,3}= \tfrac{1}{2}(\pm \sqrt{3},1)$ such that for hopping parameters $\alpha_0, \alpha_1$
\[ T_{j+1}(\alpha)= \alpha_0 \sigma_0 + \alpha_1 (\cos(\phi j) \sigma_1+ \sin(\phi j) \sigma_2),\]
where $\phi=2\pi/3$ and $k_{\theta} = 2k_D \sin(\theta/2)$ is the Moir\'e wave vector.
We then rescale coordinates by $k_{\theta}$ to obtain a new Hamiltonian with $w:=\alpha/k_{\theta}$: 
\begin{equation}
\label{eq:stand_form}
\mathcal H(w, \varphi) =\begin{pmatrix} - i \sigma_{\varphi/2}\nabla & T(x; w) \\ T(x;w)^* & - i \sigma_{-\varphi/2}\nabla  \end{pmatrix}.
\end{equation}
Switching to complex coordinates $z=x_1+ix_2$ and conjugating by $U=\operatorname{diag}(1,\sigma_1,1)$ then yields the Hamiltonian 
\begin{equation}\label{eq:conjugationrelation}
 H(w, \varphi)=U \mathcal H(w,\varphi) U
 \end{equation}
stated in \eqref{eq:Hamiltonian}, studied in this article.
If instead of conjugating by $U$, we conjugate, for $\lambda=e^{i \pi/4}$, by 
\[\mathcal V=\begin{pmatrix} i \lambda & 0 & 0 & 0 \\ 0 & 0 & 0 & -\bar \lambda \\ 0 & 0 & i \lambda & 0 \\ 0 & - \bar \lambda & 0 &0 \end{pmatrix},\]
we obtain an operator that has a particularly simple form for the anti-chiral model, $w_1=0$:
\begin{equation}
\begin{split}
\label{eq:Q}
\mathscr H(w,\varphi)&:=k_{\theta}^{-1} \mathcal V H(w,\varphi)\mathcal V^*\\
&=\begin{pmatrix} 0&  Q(w_0,\varphi)^* \\ Q(w_0,\varphi) & 0\end{pmatrix}
\end{split}
\end{equation}
with 
\begin{equation}
\label{eq:Q2}
Q(w_0,\varphi):=
\begin{pmatrix}
  w_0 V(z,\bar z)  & 2e^{i  \varphi/2}D_{z}    \\
 2e^{i  \varphi/2}D_{\bar z} & w_0 \overline{ V(z,\bar z)}
\end{pmatrix}.
\end{equation}

\section{Symmetries}

In this section we provide additional details on the symmetries that the Hamiltonian satisfies.
We recall that the potentials in the Hamiltonian $H(w,\varphi)$ of the continuum model \eqref{eq:Hamiltonian} satisfy the symmetry relations
\begin{equation}
\label{eq:symU}
U ( z + \mathbf a ) = \bar \omega U ( z ) , \ \ \mathbf a = \tfrac 43 \pi i \omega^\ell, \ \ell = 1,2 , \ \ U ( \omega z ) = \omega U ( z ) , \ \ 
\overline{ U ( \bar z ) } = U ( z ) , 
\end{equation}
and hence, with the same $ \mathbf a $, for $V(z)=2\partial_z U(z)$
\begin{equation}
\label{eq:symV}
V ( z + \mathbf a ) = \bar \omega V ( z ) , \ \ 
V ( \omega z ) = V ( z ) , \ \  \overline{ V ( z ) } = V ( - z ) , \ \ 
{ V ( \bar z ) } = V (- z ) . 
\end{equation}
We can generalize these potentials by considering, for $n \in \mathbb Z$, the functions
\begin{equation}
\begin{split}
f_n(z)&:=\sum_{k=0}^2 \omega^k e^{\frac{n}{2}(z \bar \omega^k- \bar z \omega^k)} \text{ and }\\
g_n(z)&:= \sum_{k=0}^2 e^{\frac{n}{2}(z \bar \omega^k- \bar z \omega^k)}.
\end{split}
\end{equation}
 Then, $f_n$ and $g_n$ satisfy the rotational symmetries $f_n(\omega z)=\omega f_n(z)$ and $g_n(\omega z)=g_n(z)$, and in addition, the translational symmetries
\begin{equation}
\begin{split}
 f_n(z+\mathbf a) &= \bar \omega^n f_n(z), \quad  \mathbf a = {\textstyle \frac{4}{3}} \pi i \omega^{\ell},\quad  \ell \in \ZZ_3, \\
 g_n(z+\mathbf a) &= \bar \omega^n g_n(z), \quad  \mathbf a = {\textstyle \frac{4}{3}} \pi i \omega^{\ell},\quad  \ell \in \ZZ_3.
\end{split}
\end{equation}
Hence, we may replace the potentials $U,V \in C^{\infty}(\CC/\Gamma)$ in the continuum model for suitable $\alpha_n, \beta_n \in \RR$ by
\begin{equation}
\label{eq:gen_pot}
U(z) = \sum_{n = 3\ZZ+1} \alpha_n g_n(z) \text{ and } V(z) = \sum_{n = 3\ZZ+1} \beta_n f_n(z).
\end{equation}

The symmetries of the Hamiltonian $H(w,\varphi)$ were thoroughly studied in \cite{K19,L19} for the continuum model with shifted Dirac points. In particular, the mirror symmetry that we study in Lemma \ref{l:mir} 
was pointed out in \cite{K19}. When neglecting the shift of Dirac points
(putting $ \varphi=0$), we find an additional particle-hole symmetry in our model; see Lemma \ref{l:chir}. 

\medskip

The Hamiltonian $ H ( w,\varphi ) $ is periodic with respect to 
$ \Gamma  := 4  \pi i ( \omega \ZZ \oplus \omega^2 \ZZ ) $. More 
interesting symmetries are provided by an action of the Heisenberg group
over $ \ZZ_3 $:
\begin{equation}
\label{eq:defG}   
\begin{gathered} G_3 \simeq \ZZ^2_3 \rtimes \ZZ_3, \ \ 
\ZZ_3 \ni k  : \mathbf a \to \bar \omega^k \mathbf a , \ \ 
( \mathbf a , k ) \cdot ( \mathbf a' , \ell ) = ( \mathbf a + \bar \omega^k 
\mathbf a' , k + \ell ), \\
\mathbf a \in \Gamma_3 := \Gamma/(\Gamma/3) \simeq \ZZ_3^2 , \ \ F : 
\mathbf a = \tfrac 43 \pi ( a_1 i \omega + a_2 i \omega^2 ) \mapsto
( a_1 , a_2 ) \in \ZZ^2_3 , \\
 \bar \omega \mathbf a =  F^{-1} M F \mathbf a , \ 
\ \ M =  \begin{pmatrix} -1 & 1 \\ -1 & 0 \end{pmatrix} . 
\end{gathered}
\end{equation}

From \eqref{eq:symU} and \eqref{eq:symV} we see that for $D$ and $C$ defined in~\eqref{eq:D},
and $ L_{\mathbf a } \mathbf v ( z ) :=\mathbf v ( z + \mathbf a ) $,  we have
\[  D(w_1,\varphi) L_{ \mathbf a } = 
L_{\mathbf a } \begin{pmatrix} 2 e^{i\varphi/2}   D_{\bar z } &  \omega w_1 U 
\\ \bar \omega w_1 U ( - \bullet ) & 2   e^{-i\varphi/2}D_{\bar z }  \end{pmatrix} = 
 \begin{pmatrix}  \omega & 0 \\ 0 & 1 \end{pmatrix} 
L_{\mathbf a } D(w_1,\varphi) \begin{pmatrix} \bar  \omega & 0 \\ 0 & 1 \end{pmatrix} , \]
and 
\[ C ( w_0 ) L_{\mathbf a } = w_0 L_{\mathbf a } 
\begin{pmatrix} 0 & \omega V \\ \bar \omega V ( - \bullet ) & 0 \end{pmatrix}
= \begin{pmatrix}  \omega & 0 \\ 0 & 1 \end{pmatrix} 
L_{\mathbf a } C(w_0) \begin{pmatrix} \bar  \omega & 0 \\ 0 & 1 \end{pmatrix} .\]

Hence if we define
\begin{equation}
\label{eq:defLa}  {\mathscr L}_{\mathbf a } :=  \mathscr U  L_{\mathbf a } , \ \ \
 \mathscr U  := \begin{pmatrix} \omega & 0 & 0 & 0 \\
0 & 1 & 0 & 0 \\
0 & 0 & \omega & 0  \\
0 & 0 & 0 & 1 \end{pmatrix}, \ \ 
\mathbf a = \tfrac{ 4}3 \pi i \omega^\ell , \end{equation}
then 
\[  \mathscr L_{\mathbf a } H(w,\varphi) = H(w,\varphi) \mathscr L_{\mathbf a } . \]
We extend this to an action of $ \Gamma_3 $ by putting
\[ \mathscr L_{\mathbf a } \mathbf u  = \mathscr U^{a_1 + a_2 } \mathbf 
u ( z + \mathbf a ) , \ \ \ \mathbf a := \textstyle{\frac{4}{3}}
\pi i ( a_1 \omega + a_2 \omega^2 ). \]
Using $ U ( \omega z ) = \omega U ( z ) $ and $ V ( \omega z ) = V ( z ) $,
we also see that the unitary transformation
\[  \mathscr C : L^2 ( \CC ; \CC^4 ) \to L^2 ( \CC ; \CC^4 ) , \ \ \
 \mathscr C \mathbf u ( z ) := \begin{pmatrix} 1 & 0 & 0 & 0 \\
 0 & 1 & 0 & 0 \\
 0 & 0 &  \bar \omega & 0 \\
0 & 0 & 0 &  \bar \omega \end{pmatrix} \mathbf u (  \omega z )
\]
commutes with $ H ( w,\varphi ) $.
Since $   \mathscr C \mathscr L_{\bar \omega \mathbf a }
= \mathscr L_{ \mathbf  a }  \mathscr C $, we see that we obtain 
a unitary action of  $G_3 $ commuting with $ H ( w,\varphi ) $.

We now list additional symmetries.
The first one is a simple symmetry in the parameters $w=(w_0,w_1)$.
\begin{lemm}[$w$-symmetry]
If $\mathscr Q : =\operatorname{diag}(i,-i,i,-i)$, then 
\[  \mathscr Q H( w,\varphi ) \mathscr Q^* =H ( -w ,\varphi), \ \ \ \mathscr L_{\mathbf a } \mathscr Q = \mathscr Q 
\mathscr L_{\mathbf a }, \ \ \mathscr C \mathscr Q = \mathscr Q \mathscr C, 
 \]
for $ \mathbf a \in \Gamma_3 $.
\end{lemm}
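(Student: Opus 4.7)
The plan is to verify each of the three identities by direct computation, exploiting the diagonal form of $\mathscr Q$.

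First I would observe that $\mathscr Q = \operatorname{diag}(J,J)$ in $2\times 2$ block form, with $J = \operatorname{diag}(i,-i) = i\sigma_3$, and that $\mathscr Q^* = \mathscr Q^{-1}$. Since $\mathscr Q$ acts as a diagonal multiplication operator on $L^2(\CC;\CC^4)$, it commutes with every multiplication operator whose matrix is itself diagonal on $\CC^4$, as well as with the scalar substitutions $z \mapsto z + \mathbf a$ and $z \mapsto \omega z$. The matrix $\mathscr U^{a_1+a_2}$ appearing in $\mathscr L_{\mathbf a}$ and the rotation prefactor $\operatorname{diag}(1,1,\bar\omega,\bar\omega)$ appearing in $\mathscr C$ are both diagonal, so $\mathscr L_{\mathbf a}\mathscr Q = \mathscr Q \mathscr L_{\mathbf a}$ and $\mathscr C \mathscr Q = \mathscr Q \mathscr C$ follow immediately.

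For the intertwining identity, I would write
\[ \mathscr Q H(w,\varphi) \mathscr Q^* = \begin{pmatrix} J C(w_0) J^* & J D(w_1,\varphi)^* J^* \\ J D(w_1,\varphi) J^* & J C(w_0) J^* \end{pmatrix}, \]
and then compute each block. With $J = i\sigma_3$, conjugation by $J$ fixes $2\times 2$ diagonal matrices and negates $2\times 2$ off-diagonal ones (via $\sigma_3 \sigma_j \sigma_3 = -\sigma_j$ for $j=1,2$). Since $C(w_0)$ is purely off-diagonal this yields $J C(w_0) J^* = C(-w_0)$. For $D(w_1,\varphi)$ the diagonal derivative entries $2e^{\pm i\varphi/2} D_{\bar z}$ are preserved while the off-diagonal entries $w_1 U(\pm z)$ pick up a minus sign, giving $D(-w_1,\varphi)$; taking adjoints yields $J D(w_1,\varphi)^* J^* = D(-w_1,\varphi)^*$. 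Reassembling the blocks produces $H(-w,\varphi)$ as claimed.

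The argument is entirely mechanical; the only potential pitfall is a sign slip when tracking the conjugation of the $2\times 2$ off-diagonal blocks, which is ruled out once and for all by the Pauli identity $\sigma_3 \sigma_1 \sigma_3 = -\sigma_1 = \sigma_3 \sigma_2 \sigma_3 \cdot (-i/i)\cdot\ldots$, i.e.\ the fact that $i\sigma_3$ anticommutes with $\sigma_1$ and $\sigma_2$.
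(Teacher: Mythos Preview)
Your proof is correct and follows the natural direct-computation approach; the paper in fact states this lemma without proof, treating it as immediate. One cosmetic remark: the closing sentence about Pauli identities is garbled (the chain of equalities as written does not parse), and you could simply say that $\sigma_3$ anticommutes with any off-diagonal $2\times 2$ matrix, hence $J M J^* = \sigma_3 M \sigma_3 = -M$ whenever $M$ has zero diagonal.
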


\begin{lemm}[Mirror symmetry]
\label{l:mir}
For the unitary involution
\[ \mathscr M \mathbf u ( z )  := \begin{pmatrix} 0 & \sigma_1 \\ \sigma_1 & 0 \end{pmatrix} \mathbf u ( \bar z ) 
,  \ \ \ \sigma_1 := \begin{pmatrix} 0 & 1 \\ 1 & 0 \end{pmatrix} , 
\]
we have 
\begin{equation}
\label{eq:H2M} 
\mathscr M H ( w ,\varphi) = H ( w,\varphi ) \mathscr M , \ \ \ \ 
\mathscr M \mathscr C = \bar \omega \mathscr C^{-1} \mathscr M , \ \ \ \
\mathscr M  \mathscr L_{\mathbf a }= \omega^{a_1 + a_2}  \mathscr L_{\bar {\mathbf a } } \mathscr M. 
\end{equation}
\end{lemm}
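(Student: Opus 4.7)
The plan is to verify each of the three identities in \eqref{eq:H2M} by direct computation, factoring $\mathscr M = JR$, where $J$ is the $4\times 4$ involution that permutes row/column indices by $1\leftrightarrow 4$, $2\leftrightarrow 3$ (the block matrix with antidiagonal $\sigma_1$-blocks appearing in the definition of $\mathscr M$), and $R\mathbf u(z):=\mathbf u(\bar z)$ is the antiholomorphic pull-back. Since both $J$ and $R$ are involutions, each intertwining identity reduces to a matrix-entry comparison once $J$ and $R$ have been commuted past the operator on the right.

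For $\mathscr M H = H\mathscr M$ I would expand $H(w,\varphi)$ from \eqref{eq:Hamiltonian} and \eqref{eq:D} as a full $4\times 4$ operator-valued matrix and verify the equivalent statement $JHJ = RHR$ block by block. The row/column swap $J$ exchanges the two Dirac sub-blocks and the two AA-coupling sub-blocks with each other, while $R$ interchanges $D_z$ and $D_{\bar z}$ and substitutes $\bar z$ into the potentials. The diagonal and derivative entries match trivially; the $w_0$ entries match by $V(\bar z)=V(-z)$ from \eqref{eq:symV}; and the $w_1$ entries require $\overline{U(\bar z)}=U(z)$ from \eqref{eq:symU}, re-read as $U(\bar z)=\overline{U(z)}$, in order to identify the factors $\overline{U(\pm z)}$ appearing in $D^*$ with $U(\pm\bar z)$ produced by the $R$-action on $D$ (and similarly with the sign flipped).

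The remaining two identities are purely algebraic. For $\mathscr M \mathscr C = \bar\omega\, \mathscr C^{-1}\mathscr M$, pushing the substitutions through yields
\begin{equation*}
\mathscr M\mathscr C\mathbf u(z)=J\,\operatorname{diag}(1,1,\bar\omega,\bar\omega)\,\mathbf u(\omega\bar z),\qquad \mathscr C^{-1}\mathscr M\mathbf u(z)=\operatorname{diag}(1,1,\omega,\omega)\,J\,\mathbf u(\omega\bar z),
\end{equation*}
where I have used $\overline{\bar\omega z}=\omega\bar z$; the desired identity then follows from the general conjugation rule $J\operatorname{diag}(a,b,c,d)J=\operatorname{diag}(d,c,b,a)$. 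For $\mathscr M\mathscr L_{\mathbf a}=\omega^{a_1+a_2}\mathscr L_{\bar{\mathbf a}}\mathscr M$ I would set $n:=a_1+a_2$ and reduce the claim to $J\mathscr U^{n}=\omega^{n}\mathscr U^{-n}J$, with $\mathscr U$ as in \eqref{eq:defLa}, after checking that in the parametrisation \eqref{eq:defG} the coordinate sum of $\bar{\mathbf a}$ equals $-n$ (using $\bar\omega=\omega^{2}$). The base case $J\mathscr U=\omega\mathscr U^{-1}J$ is immediate from the $1\leftrightarrow 4$, $2\leftrightarrow 3$ swap applied to $\operatorname{diag}(\omega,1,\omega,1)$, and iteration closes the computation.

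The main obstacle is book-keeping the interaction between the antiholomorphic substitution and the complex conjugation of the potentials: since $U$ and $V$ are complex-valued, the substitution $z\mapsto\bar z$ does not return the original function but its complex conjugate, and it is precisely this feature, combined with the off-diagonal placement of $\overline{U(\pm z)}$ in $D^{*}$ versus $U(\pm z)$ in $D$, that allows the block-swap $J$ to absorb the mismatch. A patient entry-by-entry check is necessary to avoid sign errors and errors in the sign of the argument inside $U$ and $V$.
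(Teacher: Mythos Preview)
Your proposal is correct and follows essentially the same route as the paper: factor $\mathscr M=JR$ (the paper writes $\Sigma$ for your $R$), reduce the first identity to the block computation $\sigma_1 C\sigma_1=\Sigma C\Sigma$ and $(\sigma_1 D\sigma_1)^*=\Sigma D\Sigma$ via $V(\bar z)=V(-z)$ and $\overline{U(z)}=U(\bar z)$, and then handle the $\mathscr C$ and $\mathscr L_{\mathbf a}$ identities by the diagonal-swap rule $J\operatorname{diag}(a,b,c,d)J=\operatorname{diag}(d,c,b,a)$ together with $F(\bar{\mathbf a})=(-a_2,-a_1)$. Your systematic use of $J\mathscr U^{n}=\omega^{n}\mathscr U^{-n}J$ is a tidy way to package the third computation, but there is no substantive difference from the paper's argument.
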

\begin{proof}
We have
\[ \begin{pmatrix} 0 & \sigma_1 \\ \sigma_1 & 0 \end{pmatrix} H ( w,\varphi )  \begin{pmatrix} 0 & \sigma_1 \\ \sigma_1 & 0 \end{pmatrix}  = 
\begin{pmatrix} \sigma_1 C ( w_0) \sigma_1 &  \sigma_1 D ( w_1,\varphi ) 
\sigma_1 \\ (\sigma_1 D ( w_1,\varphi ) 
\sigma_1 )^* & \sigma_1 C ( w_0 ) \sigma_1 \end{pmatrix}
\]
and
\[ \begin{split} \sigma_1 C ( w_0 ) \sigma_1 = w_0 \begin{pmatrix}
0 & V ( - z ) \\ V ( z ) & 0 \end{pmatrix} & = 
w_0 \begin{pmatrix} 0 & V ( \bar z ) \\
V ( - \bar z ) & 0 \end{pmatrix} 
, \\
( \sigma_1 D ( w_1 ) \sigma_1 )^* = \begin{pmatrix} 
2 e^{-i \varphi/2}D_z & w_1\overline {U ( z ) } \\
w_1 \overline {U ( - z )} & 2 e^{i \varphi/2}D_z \end{pmatrix} 
& =  \begin{pmatrix} 
2 e^{-i \varphi/2}D_z & w_1 {U ( \bar z ) } \\
w_1  {U ( - \bar z )} & 2 e^{i \varphi/2} D_z \end{pmatrix}, \end{split}
 \]
where we used  $ V ( - z ) = V ( \bar z ) $ and $ U ( \bar z ) =
\overline { U (  z ) } $. Hence, if we define $ \Sigma \mathbf u ( z ) :=
\mathbf u ( \bar z ) $ then,
\[  \sigma_1 C( w_0 ) \sigma_1 = \Sigma C ( w_0 ) \Sigma , \ \ 
( \sigma_1 D( w_1,\varphi ) \sigma_1 )^*= \Sigma D ( w_1, \varphi ) \Sigma . \]
This gives $ \mathscr M H ( w,\varphi ) \mathscr M = H ( w,\varphi ) $. 
Now,
\[ \begin{split} \mathscr M \mathscr C \mathbf u ( z ) & = 
\begin{pmatrix} 0 & \sigma_1 \\ \sigma_1 & 0 \end{pmatrix} 
\begin{pmatrix}   I_{\CC^2}  & 0 \\ 0 & \bar \omega I_{\CC^2}  \end{pmatrix}  \mathbf u ( \omega \bar z ) 
= \begin{pmatrix} 0 & \bar \omega \sigma_1 \\ \sigma_1 & 0 \end{pmatrix}
 \mathbf u\left( \overline { \bar \omega z } \right) \\
&= 
\begin{pmatrix} \bar \omega  I_{\CC^2} & 0 \\
0 &  I_{\CC^2}  \end{pmatrix} \begin{pmatrix} 0 & \sigma_1 \\ \sigma_1 & 0 
\end{pmatrix} \mathbf u \left( \overline { \bar \omega z } \right)   =
\bar \omega \mathscr C^{-1} \mathscr M \mathbf u ( z ) . \end{split} \]
Finally, in the notation of \eqref{eq:defLa} and with 
$ F ( \mathbf a ) = ( a_1, a_2 ) $ (see \eqref{eq:defG}),
\[ \begin{split} \mathscr M  \mathscr L_{\mathbf a } \mathbf u ( z ) & = 
\begin{pmatrix} 0 & \sigma_1 \\ \sigma_1 & 0 
\end{pmatrix} \mathscr U^{a_1 + a_2}  \mathbf u ( \bar z + \mathbf a ) =
\begin{pmatrix} 0 & 0 & 0 & 1 \\
0 & 0 & \omega^{a_1 + a_2}  & 0 \\
0 & 1 & 0 & 0 \\
\omega^{a_1+a_2}  & 0 & 0 & 0 \end{pmatrix} \mathbf u \left( \overline
{ z + \bar{ \mathbf a }} \right) \\
& = 
\omega \mathscr U^{-(a_1+a_2) } \begin{pmatrix} 0 & \sigma_1 \\ \sigma_1 & 0 
\end{pmatrix} \mathbf u \left( \overline { z + \bar{\mathbf a }} \right) 
= \omega \mathscr L_{\bar {\mathbf a } } \mathscr M \mathbf u ( z ) ,
\end{split} 
\]
since $ F ( \bar {\mathbf a} ) = ( - a_2, -a_1 ) $. 
\end{proof}
The next symmetry we shall discuss for the continuum is the $ \mathcal P \mathcal T $ symmetry.

\begin{lemm}[$\mathcal P \mathcal T$-symmetry]
\label{lemm:PT}
Define two commuting involutions, anti-linear and linear, respectively:
\begin{equation}
\label{eq:defPT} \mathcal T \mathbf u ( z ) := \overline {\mathbf u ( z )}, \ \ \ \mathcal P \mathbf u := \begin{pmatrix} 0 & I_{\CC^2} \\
\ I_{\CC^2} & 0 \end{pmatrix} \mathbf u ( - z ) . \end{equation}
Then
\begin{equation}
\label{eq:H2PT}   \mathcal P \mathcal T H(w,\varphi) =H(w,\varphi) \mathcal P
\mathcal T , \ \ \  \mathcal P \mathcal T \mathscr C = \omega \mathscr C \mathcal P \mathcal T,
\ \ \
\mathcal P \mathcal T \mathscr L_{\mathbf a } = \mathscr L_{-\mathbf a} 
\mathcal P  \mathcal T , \ \ \
 \mathbf a \in \Gamma_3 .
\end{equation}
\end{lemm}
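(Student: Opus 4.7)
The strategy is to write $\mathcal P = PR$, where $P = \bigl(\begin{smallmatrix} 0 & I_{\CC^2} \\ I_{\CC^2} & 0 \end{smallmatrix}\bigr)$ acts on $\CC^4$ and $R\mathbf u(z) := \mathbf u(-z)$ acts on functions, and to observe that $P$, $R$, and $\mathcal T$ are pairwise commuting involutions ($P$ is a real matrix, $R$ commutes with complex conjugation since it acts only on coordinates, and both commute with multiplication by a real matrix). Consequently $(\mathcal P\mathcal T)^{-1} = \mathcal P\mathcal T$, and each of the three identities in \eqref{eq:H2PT} is equivalent to the corresponding conjugation $\mathcal P\mathcal T X\mathcal P\mathcal T = Y$.

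For the main commutation $\mathcal P\mathcal T H(w,\varphi)\mathcal P\mathcal T = H(w,\varphi)$, I would perform the $R\mathcal T(\cdot)\mathcal T R$ step blockwise on the $2\times 2$ entries of $H$ and then apply the outer block-swap $P(\cdot)P$. The computational inputs are the Wirtinger identities $\mathcal T D_{\bar z}\mathcal T = -D_z$ and $RD_{\bar z}R = -D_{\bar z}$ (and symmetrically for $D_z$), together with the two symmetries $\overline{U(z)} = U(\bar z)$ (obtained from the last relation of \eqref{eq:symU} by the substitution $z\mapsto\bar z$) and $\overline{V(z)} = V(-z)$ (from \eqref{eq:symV}). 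These combine to give
\begin{equation*}
R\mathcal T\,(2e^{\pm i\varphi/2}D_{\bar z})\,\mathcal T R = 2e^{\mp i\varphi/2}D_z, \quad R\mathcal T\, U(\pm z)\, \mathcal T R = \overline{U(\mp z)}, \quad R\mathcal T\, V(\pm z)\, \mathcal T R = V(\mp z),
\end{equation*}
from which one reads off $R\mathcal T D(w_1,\varphi)\mathcal T R = D(w_1,\varphi)^*$ and $R\mathcal T C(w_0)\mathcal T R = C(w_0)$. The outer conjugation by $P$ then swaps the two off-diagonal blocks, returning $R\mathcal T H\mathcal T R = \bigl(\begin{smallmatrix} C & D \\ D^* & C \end{smallmatrix}\bigr)$ back to $H = \bigl(\begin{smallmatrix} C & D^* \\ D & C \end{smallmatrix}\bigr)$.

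The two auxiliary commutation relations are short algebraic checks. Evaluating $\mathcal P\mathcal T\mathscr C$ and $\omega\mathscr C\mathcal P\mathcal T$ on a test vector $\mathbf u$ reduces the first identity to the $4\times 4$ matrix equality $P\,\diag(1,1,\omega,\omega) = \omega\,\diag(1,1,\bar\omega,\bar\omega)\,P$, which is immediate by inspection. For $\mathcal P\mathcal T\mathscr L_{\mathbf a} = \mathscr L_{-\mathbf a}\mathcal P\mathcal T$ with $\mathbf a\in\Gamma_3$, the two facts needed are $\overline{\mathscr U} = \mathscr U^{-1}$ (since $\mathscr U = \diag(\omega,1,\omega,1)$ has unimodular diagonal entries) and that $P$ commutes with every power of $\mathscr U$, because $\mathscr U$ is block-diagonal with two identical $\CC^2$ blocks; combined with the fact that $R$ turns translation by $\mathbf a$ into translation by $-\mathbf a$, this closes the argument.

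The main obstacle is purely bookkeeping: carefully tracking how complex conjugation interacts with the Wirtinger derivatives $D_{\bar z}$ and $D_z$, and distinguishing the two inequivalent complex-conjugation symmetries available for $U$ ($\overline{U(z)} = U(\bar z)$) and for $V$ ($\overline{V(z)} = V(-z)$). Once these four identities are in hand, each of the three verifications reduces to a short explicit calculation.
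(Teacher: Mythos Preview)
Your proof is correct and follows essentially the same direct-verification approach as the paper, just organized more systematically via the decomposition $\mathcal P = PR$. One minor slip: in your displayed equation you should have $R\mathcal T\, V(\pm z)\, \mathcal T R = V(\pm z)$ (not $V(\mp z)$), since $\mathcal T$ sends $V(\pm z)\mapsto V(\mp z)$ and then $R$ sends it back; your stated conclusion $R\mathcal T C(w_0)\mathcal T R = C(w_0)$ is nevertheless correct.
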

\begin{proof}
The first identity in \eqref{eq:H2PT} follows from immediately from
 $\overline{V(-z)}= V(z)$. To see the second one, we write 
\[
\begin{split}
\mathcal P \mathcal T \mathscr C \mathbf u(z) = \begin{pmatrix} 0 & \ \omega I_{\CC^2}  \\ I_{\CC^2}  & 0 \end{pmatrix} \overline{\mathbf u(-\omega z)}
= \omega 
 \begin{pmatrix} 0 &  I_{\CC^2}  \\ \ \bar \omega I_{\CC^2}  & 0 \end{pmatrix}
\overline{\mathbf u(-\omega z)} = 
\omega \mathscr C \mathcal P \mathcal T \mathbf u ( z ) . \end{split} \]
Finally, for $ \mathbf a = \frac 43 \pi \omega^\ell$, $ \ell = 1,2 $, 
\[  \mathcal P \mathcal T \mathscr L_{\mathbf a } \mathbf u ( z ) = 
\begin{pmatrix} 0 & 0 & \bar \omega & 0 \\
0 & 0 & 0 & 1 \\
\bar \omega  & 0 & 0 & 0  \\
0 & 1 & 0 & 0 \end{pmatrix} \overline{\mathbf u ( - (z - \mathbf a )) } 
= \mathscr L_{-\mathbf a } \mathcal P \mathcal T \mathbf u ( z ) ,\]
giving the last equality in \eqref{eq:H2PT}.
\end{proof}
While all the previous symmetries applied to the full continuum model and, therefore, in particular also to the chiral and anti-chiral models, the following symmetry only applies when $\varphi=0$.

\begin{lemm}[Particle-hole symmetry]
\label{l:chir}
Define
\[ (\mathscr S \mathbf v)(z):= \begin{pmatrix} \sigma_2 & 0 \\
0  & \sigma_2 \end{pmatrix} \mathbf v(-z), \ \  \ \sigma_2 := 
\begin{pmatrix} 0 & - i \\
i & \ 0 \end{pmatrix} .  \] Then 
\begin{equation}
\label{eq:S2H}  \mathscr S H(w,0) \mathscr S = -H(w,0), \ \ \ 
\mathscr S^*=\mathscr S^{-1}=\mathscr S ,\end{equation}
and
\begin{equation}
\label{eq:S2G}
\mathscr C \mathscr S = 
\mathscr S \mathscr C , \ \ \ \mathscr L_{\mathbf a } \mathscr S = \omega^{a_1 + a_2}  \mathscr S \mathscr L_{-\mathbf a }, \ \ \mathbf a := \tfrac 43
\pi i ( a_1 \omega + a_2 \omega^2 ). 
\end{equation}
\end{lemm}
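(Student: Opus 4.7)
The plan is to verify the three claims in \eqref{eq:S2H} and \eqref{eq:S2G} by direct computation, factoring $\mathscr S = \mathscr S_0 P$, where $\mathscr S_0 := \operatorname{diag}(\sigma_2, \sigma_2)$ acts as a constant $4\times 4$ multiplier and $P\mathbf v(z) := \mathbf v(-z)$ is the parity operator. Since $P$ and $\mathscr S_0$ commute, $\sigma_2^2 = I_{\CC^2}$, and $P^2 = I$, one immediately gets $\mathscr S^2 = \mathscr S_0^2 P^2 = I$. Self-adjointness follows from $\sigma_2^* = \sigma_2$ and the obvious unitary involutive nature of $P$ on $L^2(\CC;\CC^4)$, whence $\mathscr S^* = P^* \mathscr S_0^* = P \mathscr S_0 = \mathscr S_0 P = \mathscr S$.

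For the anti-commutation $\mathscr S H(w,0) \mathscr S = -H(w,0)$, I first conjugate each inner $2\times 2$ block by $P$. Using $P D_{\bar z} P = -D_{\bar z}$ and $P M_f P = M_{f(-\cdot)}$ for multiplication operators, one finds
\begin{equation*}
P C(w_0) P = w_0 \begin{pmatrix} 0 & V(-z) \\ V(z) & 0 \end{pmatrix}, \qquad P D(w_1,0) P = \begin{pmatrix} -2D_{\bar z} & w_1 U(-z) \\ w_1 U(z) & -2D_{\bar z} \end{pmatrix},
\end{equation*}
and analogously for $D(w_1,0)^*$. The Pauli relations $\sigma_2 \sigma_0 \sigma_2 = \sigma_0$ and $\sigma_2 \sigma_1 \sigma_2 = -\sigma_1$ then yield the key conjugation formulas
\begin{equation*}
\sigma_2 \begin{pmatrix} 0 & a \\ b & 0 \end{pmatrix} \sigma_2 = -\begin{pmatrix} 0 & b \\ a & 0 \end{pmatrix}, \qquad \sigma_2 \begin{pmatrix} \alpha & a \\ b & \alpha \end{pmatrix} \sigma_2 = \begin{pmatrix} \alpha & -b \\ -a & \alpha \end{pmatrix}.
\end{equation*}
Applying these to the parity-conjugated blocks gives $\sigma_2 (PC(w_0)P) \sigma_2 = -C(w_0)$ and $\sigma_2 (PD(w_1,0)P) \sigma_2 = -D(w_1,0)$, together with the corresponding statement for $D(w_1,0)^*$. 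Because $\mathscr S_0$ is block-diagonal in the outer $2\times 2$ structure with identical $\sigma_2$ blocks, the outer structure of $H(w,0)$ is preserved and every inner block picks up a minus sign, so $\mathscr S H(w,0) \mathscr S = -H(w,0)$.

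The relations in \eqref{eq:S2G} are short calculations of the same flavour. For $\mathscr C \mathscr S = \mathscr S \mathscr C$, both sides act on $\mathbf u(z)$ as $\operatorname{diag}(\sigma_2, \bar\omega \sigma_2) \mathbf u(-\omega z)$, using that $z \mapsto \omega z$ commutes with parity and that the scalar multipliers in $\mathscr C$ commute with the block-diagonal $\mathscr S_0$. For the twisted commutation with $\mathscr L_{\mathbf a}$, direct substitution reduces the claim to the algebraic identity $\mathscr U^n \mathscr S_0 \mathscr U^n = \omega^n \mathscr S_0$ with $n = a_1 + a_2$, which in turn follows by induction from the base case $u \sigma_2 u = \omega \sigma_2$ for $u = \operatorname{diag}(\omega, 1)$, a one-line matrix check, combined with $P L_{\mathbf a} = L_{-\mathbf a} P$.

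The only genuine subtlety, rather than an obstacle, is the need to set $\varphi = 0$: the second conjugation formula above swaps the two diagonal entries of $D(w_1,\varphi)$, and only when these coincide does one recover $-D(w_1,\varphi)$ exactly. This structural feature is precisely what distinguishes the particle-hole symmetric continuum model from its $\varphi \neq 0$ counterpart and explains why $\mathscr S$ fails to be a symmetry once the Dirac cones are twisted.
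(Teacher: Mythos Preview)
Your proof is correct and follows essentially the same direct-computation approach as the paper: both arguments reduce the anti-commutation to the block identities $\sigma_2 C(w_0)\sigma_2$ and $\sigma_2 D(w_1,0)\sigma_2$ combined with the parity substitution $z\mapsto -z$, and both handle \eqref{eq:S2G} by tracking how $\mathscr U$ and $\sigma_2$ interact. Your explicit factorization $\mathscr S=\mathscr S_0 P$ and the closing remark on why $\varphi=0$ is required are a nice organizational touch, but the substance is the same.
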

\begin{proof}
To see this notice that 
\begin{equation}
\begin{split}
\begin{pmatrix} \sigma_2 & 0 \\
0  & \sigma_2 \end{pmatrix} H(w,0) \begin{pmatrix} \sigma_2 & 0 \\
0  & \sigma_2 \end{pmatrix} = \begin{pmatrix} \sigma_2 C ( w_0) 
\sigma_2  & (\sigma_2 D(w_1,0) \sigma_2 )^*  \\  \sigma_2 D(w_1,0) \sigma_2  &  \sigma_2 C ( w_0) 
\sigma_2   \end{pmatrix}, 
\end{split}
\end{equation}
and 
\[   \sigma_2 C ( w_0 ) \sigma_2 = - w_0 \begin{pmatrix}
0 & V ( - z ) \\
V ( z ) & 0 \end{pmatrix} , \ \ 
\sigma_2 D ( w_1,0 ) \sigma_2 =
- \begin{pmatrix} 2 D_{-\bar z } &  w_1U ( - z ) \\
w_1 U ( z ) & 2 D_{ - \bar z } \end{pmatrix}, \]
which shows \eqref{eq:S2H}. To see \eqref{eq:S2G},  we consider
the action on $ \CC^2 $ in the block decomposition of $ \CC^4 $:
if $ L_{\mathbf a } u := u ( z + \mathbf a ) $ and 
$ R u := u ( - z ) $ then, for $ \mathbf a = \frac 43 \pi i \omega^\ell$, 
$ \ell = 1,2$, 
\[  \begin{split}  \mathscr L_{\mathbf a} \mathscr S &= 
L_{\mathbf  a} \begin{pmatrix} \omega & 0 \\ 0 & 1 \end{pmatrix} 
\begin{pmatrix} 0 & - i \\ i & \ 0 \end{pmatrix} R =
\begin{pmatrix} 0 & - i\omega  \\ i & \ 0 \end{pmatrix} R L_{-\mathbf a }
= R \begin{pmatrix} 0 & - i\omega  \\ i & \ 0 \end{pmatrix} 
\begin{pmatrix} \omega & 0 \\ 0 & 1 \end{pmatrix} \mathscr L_{-\mathbf a} \\
& = \begin{pmatrix} 0 & - i\omega  \\ i\omega  & \ 0 \end{pmatrix} R \mathscr L_{-\mathbf a}
= \omega \begin{pmatrix} 0 & - i \\ i & \ 0 \end{pmatrix} R \mathscr L_{-\mathbf a } = \omega \mathscr S \mathscr L_{-\mathbf a } .
\end{split}
\]
Since $ \mathscr S \mathscr C = \mathscr C \mathscr S $ is easy to check, this completes the proof.
\end{proof}

\section{Symmetries protect the kernel of \boldmath $ H ( w,0 ) $}

We recall the notation from \cite[\S 2.2]{BEWZ}:
for $  ( k, p ) \in \ZZ_3^2 $
\begin{equation}
\label{eq:1drep}
L^2_{ k, p} ( \CC/ \Gamma; \CC^4 ) := \{ 
\mathbf u \in L^2 ( \CC/ \Gamma; \CC^4 ) : \mathscr L_{\mathbf a} 
\mathbf u = \omega^{ k ( a_1 + a_2 ) } \mathbf u , \ \
\mathscr C \mathbf u = \bar \omega^{ p } \mathbf u  \} ,  
\end{equation}
where $ \mathbf a = \frac43 \pi i ( a_1 \omega + a_2 \omega^2 ) $.
Because $ H ( w ,0) $ commutes with the $ G_3$ action, we obtain
(unbounded) operators 
\[  H_{k,p} ( w,0 ) : L^2_{ k, p} ( \CC/ \Gamma; \CC^4 ) \to
L^2_{ k, p} ( \CC/ \Gamma; \CC^4 ) . \]

We now look at the action of the particle-hole and mirror symmetries on 
these spaces (Lemmas \ref{l:mir} and \ref{l:chir} respectively):
\begin{equation}
\label{eq:sym2rep}
\mathscr S : L^2_{ k,p} \to L^2_{ -k+1, p} , \ \ \ \
\mathscr M : L^2_{ k, p} \to L^2_{ -k+1, -p+1} .
\end{equation}
\begin{proof}[Proof of \eqref{eq:sym2rep}]
We see that for $ \mathbf u \in L^2_{k,p} $, 
\[ \mathscr L_{\mathbf a} \mathscr S \mathbf u = 
\mathscr S \omega^{a_1 + a_2 } \mathscr L_{ - \mathbf a } \mathbf u =
\omega^{ (-k + 1 ) ( a_1 + a_2 ) } \mathscr S \mathbf u . \]
Since $ \mathscr C $ commutes with $ \mathscr S $, this proves the first
claim.
Since $ F ( \bar{\mathbf a } ) = (-a_2, -a_1 ) $, the same argument applies to $ \mathscr M $. We also have
\[ \mathscr C \mathscr M \mathbf u =  \bar \omega \mathscr M \mathscr C^{-1}
\mathbf u = \bar \omega^{-p+1} \mathscr M \mathbf u ,\]
which completes the proof.
\end{proof}
From \eqref{eq:sym2rep} and the commutation relations
$ \mathscr  S H ( w,0 ) = - H( w,0 ) \mathscr S $, $ 
\mathscr M H ( w,0 ) = H ( w,0 ) \mathscr M $, we obtain for 
$ w \in \RR^2 $, 
\begin{equation}
\label{eq:symH1}  
\begin{split} \Spec ( H_{k,p} ( w,0 ) ) & = - \Spec ( H_{-k+1, p}  ( w,0 ) ), \\  \Spec ( H_{k, p} ( w ,0) ) & = \Spec ( H_{ -k + 1 , - p +1 } ( w ,0) ) . \end{split}  \end{equation}

We now examine the action of the $ \mathcal P \mathcal T $ symmetry:
\begin{equation}
\label{eq:PT2rep}
\mathcal P \mathcal T : L^2_{ k, p } \to L^2_{k,-p+1} . 
\end{equation}
In fact,
\[  \mathscr L_{\mathbf a } \mathcal P\mathcal T \mathbf u = 
\mathcal P \mathcal T \mathscr L_{ - \mathbf a } \mathbf u= 
\mathcal P \mathcal T \omega^{-(a_1+a_2)} \mathbf u = 
\omega^{a_1 + a_2} \mathcal P \mathcal T \mathbf u, \]
and 
\[ \mathscr C \mathcal P\mathcal T \mathbf  u  =
\bar \omega \mathcal P \mathcal T \mathscr C \mathbf u = 
 \bar \omega  
\mathcal P \mathcal T \bar \omega^p \mathbf u = 
\bar \omega^{ - p + 1} \mathcal P \mathcal T \mathbf u . \]

Hence, in addition to \eqref{eq:symH1} we also have
\begin{equation}
\label{eq:symH2}  \Spec ( H_{ k, p }  ( w,0)) = \Spec ( H_{ k, -p + 1} ( w,0) ). 
\end{equation}

Using successively \eqref{eq:symH1} and \eqref{eq:symH2}, we conclude that
\begin{equation}
\label{eq:even}   \begin{split} \Spec ( H_{k,p} ( w,0) ) & = - \Spec ( H_{ -k+1, p } ( w ,0) )
= - \Spec ( H_{k , -p+1 } ( w,0 ) )\\
&  = - \Spec ( H_{k, p } ( w,0 )). \end{split}
\end{equation}
Hence the spectrum on $ L^2_{k,p} $ is invariant under 
$ \lambda \mapsto - \lambda $ and we have, as in \cite[\S 2.2]{BEWZ},
\begin{equation}
\label{eq:protected}
\ker_{ L^2_{ k,p} ( \CC/\Gamma; \CC^4 ) } H ( w,0 ) \neq \{ 0 \}, \ \ \
w \in \RR^2 , \ \   k, p \in \{ 0, 1 \} . 
\end{equation}

\section{Absence of flat bands in anti-chiral model}

We now show that there are no flat bands in the anti-chiral model, $\varphi=w_1=0$, at zero energy.

\begin{theo}
\label{theo:noflatbands}
For the anti-chiral model $w_0 \neq 0$ and $\varphi = w_1=0$, there are no flat bands at energy zero.
\end{theo}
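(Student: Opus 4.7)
The plan is to implement the Thomas-style complexification argument sketched in the text after \eqref{eq:identity}, taking advantage of the simplification afforded by the block-off-diagonal form \eqref{eq:Q}--\eqref{eq:Q2}. First I would reduce the question to an assertion about $Q_{\mathbf k}(w_0)$: via the unitary conjugation \eqref{eq:conjugationrelation} and \eqref{eq:Q}, $H_{\mathbf k}((w_0,0),0)$ is unitarily equivalent to $k_\theta$ times a block-off-diagonal operator with $Q_{\mathbf k}(w_0)$ and $Q_{\mathbf k}(w_0)^*$ on the off-diagonal; hence $0 \in \Spec H_{\mathbf k}((w_0,0),0)$ if and only if $\ker Q_{\mathbf k}(w_0) \oplus \ker Q_{\mathbf k}(w_0)^* \neq \{0\}$. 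A flat band at zero would then mean this kernel is nontrivial for every real $\mathbf k$ in the Brillouin zone.

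Second, I would verify the algebraic identity \eqref{eq:identity}. Using that $(2 D_z + \bar{\mathbf k})(2 D_{\bar z} + \mathbf k) = (D_x + (k_1, k_2))^2$ (since the factors commute in the absence of a gauge potential), a direct expansion of $Q_{\mathbf k}^2$ produces $(D_x + \mathbf k)^2 I_{\CC^2}$ on the diagonal, a multiplicative piece from $w_0^2 V(z) V(-z)$ (using $\overline{V(z)} = V(-z)$), and first-order cross commutators of the form $2 w_0 (V \pm \bar V) D_z$. Choosing $V_{11}$ to be an explicit first-order differential operator compensating these cross terms yields $Q_{\mathbf k}(Q_{\mathbf k} + V_{11}) = (D_x + \mathbf k)^2 I_{\CC^2} + V_{12}$ with $V_{12}$ a bounded matrix-valued multiplication operator; the adjoint identity is obtained analogously.

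Third, I would complexify $k_1$. Since $(D_x + \mathbf k)^2$ acts diagonally on Fourier modes with eigenvalues $|n + \mathbf k|^2$, $n \in \Gamma^*$, these stay bounded away from zero at rate $|\Im k_1|^2$ when $k_2 \in \RR$ is fixed and $|\Im k_1| \to \infty$; thus $\|((D_x + \mathbf k)^2 I)^{-1}\|_{L^2 \to L^2} = \mathcal O(|\Im k_1|^{-2})$. Together with the boundedness of $V_{12}$, a Neumann series gives invertibility of $(D_x + \mathbf k)^2 I + V_{12}$ for $|\Im k_1|$ sufficiently large. By \eqref{eq:identity}, $Q_{\mathbf k}$ is then injective, and by the adjoint identity $Q_{\mathbf k}^*$ is also injective, so $0 \notin \Spec H_{\mathbf k}$ whenever $|\Im k_1|$ is large.

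Finally, if a flat band at zero existed, then for each fixed $k_2 \in \RR$ the set $S_{k_2} := \{k_1 \in \CC : 0 \in \Spec H_{(k_1,k_2)}\}$ would contain a full period on the real axis. Because $k_1 \mapsto H_{(k_1, k_2)}$ is a holomorphic family of closed operators with compact resolvent, analytic Fredholm theory tells us $S_{k_2}$ is either all of $\CC$ or discrete; containing an interval forces $S_{k_2} = \CC$, contradicting step three. The main obstacle, as I see it, is step two: explicitly identifying the first-order compensator $V_{11}$ so that $V_{12}$ really is zeroth-order. Because $\overline{V(z)} = V(-z)$ rather than $\bar V = V$, the first-order cross terms do not cancel automatically and $V_{11}$ must be chosen carefully to absorb them; once the identity is in place, the analytic continuation argument is essentially classical.
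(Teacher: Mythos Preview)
Your strategy—reduce to $Q_{\mathbf k}$ via the block-off-diagonal form \eqref{eq:Q}, establish the squaring identity \eqref{eq:identity}, complexify $k_1$ and use the resolvent bound on $((D_x+\mathbf k)^2)^{-1}$ for a Neumann series, then close with analytic Fredholm/Rellich—is exactly the route the paper takes, and your steps one, three and four match its argument essentially line for line. The only real divergence is in step two, and there your proposed remedy would fail.

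A first-order compensator $V_{11}$ is untenable: since $Q_{\mathbf k}$ is itself first-order, $Q_{\mathbf k}V_{11}$ would then be second-order with principal symbol $\bigl(\sum_j\xi_j\sigma_j\bigr)\,\sigma(V_{11})$. Because $\sum_j\xi_j\sigma_j$ is invertible for $\xi\neq 0$, this extra second-order part cannot vanish unless $\sigma(V_{11})=0$, and in any case it is not a scalar multiple of $|\xi|^2$, so it can never be absorbed into $(D_x+\mathbf k)^2 I_{\CC^2}$. The paper instead passes to the Pauli representation
\[
Q_{\mathbf k}=H_{\text{Dirac}}(\mathbf k)+iw_0\,\Im(V)\,\sigma_3+w_0\,\Re(V)\,I,\qquad H_{\text{Dirac}}(\mathbf k)=\sum_{j=1}^2(D_j-\mathbf k_j)\sigma_j,
\]
and takes $V_{11}=-w_0\,\Re(V)$, a \emph{zeroth}-order multiplication operator. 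The first-order cross terms that worried you are then dispatched by the Clifford algebra rather than by the choice of $V_{11}$: since $\sigma_1,\sigma_2$ anticommute with $\sigma_3$, the anticommutator $\{H_{\text{Dirac}},\,iw_0\Im(V)\sigma_3\}$ collapses to $\sum_j[D_j,\,w_0\Im V]\,\sigma_j\sigma_3$, a bounded multiplication operator. That is the algebraic mechanism you were missing—it is not that $\overline V=V$, but that the off-diagonal Dirac part and the $\sigma_3$ piece of $Q_{\mathbf k}$ anticommute at the matrix level, so their cross terms in the square are automatically of order zero.
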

\begin{proof}
The preliminary discussions shows that we are interested in the invertibility of the self-adjoint operator
\[ \mathscr H_{\mathbf k}(w_0):= \begin{pmatrix} 0 & Q_{\mathbf k}(w_0) \\
Q_{\mathbf k}(w_0)^* & 0 \end{pmatrix},\] where $Q_{\mathbf k}(w_0)$ is the operator \eqref{eq:Q2} after applying the Floquet transform with quasi-momentum $\mathbf k=\mathbf k_1 + i \mathbf k_2 $ and $\mathbf k_i \in \RR.$
The operator $\RR \ni \mathbf k_1 \mapsto \mathscr H_{\mathbf k}$  for $\mathbf k_2$ fixed is a self-adjoint holomorphic family with compact resolvent on $L^2(\RR^2/\Gamma)$. 
We then assume that $\mathscr H_{\mathbf k}(w_0)$ is not invertible for any $\mathbf k \in \CC $, i.e., $\mathscr H_{\mathbf k}(w_0)$ has a flat band at energy zero.

To simplify the analysis, we express $ Q_{\mathbf k}(w_0)$ and its adjoint in terms of Pauli matrices:
\begin{equation}
\begin{split}
Q_{\mathbf k}(w_0) &= \underbrace{\sum_{j=1}^2 (D_j -\mathbf k_j)\sigma_j}_{=:H_{\text{Dirac}}(\mathbf k)} + iw_0  \Im(V(z,\bar z))\sigma_3 + w_0\Re( V(z,\bar z))\operatorname{id} \\
Q_{\mathbf k}(w_0)^* &= \underbrace{\sum_{j=1}^2 (D_j -\mathbf k_j)\sigma_j}_{=:\hat{H}_{\text{Dirac}}(\mathbf k)} - iw_0 \Im( V(z,\bar z))\sigma_3 +w_0  \Re(V(z,\bar z))\operatorname{id}.
\end{split}
\end{equation}

In this setting, we have that both $\mathbf k_1,\mathbf k_2$ are real. We now complexify the real part of $\mathbf k$, which is $\mathbf k_1$, and choose $\mathbf k= \mathbf k_1+ i \mathbf k_2 $ with $\mathbf k_1:=(\mu+ i y)  $, where $\mu, y , \mathbf k_2 \in \RR.$

We thus have that 
\begin{equation}
\begin{split}
 (Q_{\mathbf k}(w_0))^2 &= H_{\text{S}}(\mathbf k) + Q_{\mathbf k}(w_0) w_0 \Re(V(z)) + W \\
  (Q_{\mathbf k}(w_0)^*)^2 &= H_{\text{S}}(\mathbf k) + Q_{\mathbf k}(w_0)^*w_0 \Re(V(z)) + \hat{W}
 \end{split}
\end{equation}
where $W,\hat{W} \in L^{\infty}(\CC;\CC^{2\times 2})$ are given by
\begin{equation}
\begin{split}
W := &-\vert \alpha_0 V(z) \vert^2 \operatorname{id}+ i e^{i  \varphi/2}\left(\partial_2 \Im (\alpha_0V(z))\sigma_1-\partial_1 \Im (\alpha_0V(z)) \sigma_2 \right)\\
&\quad + i (\partial_1\Re (\alpha_0V(z)) \sigma_1+  \partial_2\Re (\alpha_0V(z)) \sigma_2)\\
\hat{W}:=&-\vert \alpha_0 V(z) \vert^2 \operatorname{id}- i e^{-i  \varphi/2}\left(\partial_2 \Im (\alpha_0V(z))\sigma_1-\partial_1 \Im (\alpha_0V(z)) \sigma_2 \right)\\
&\quad + i (\partial_1\Re (\alpha_0V(z)) \sigma_1+  \partial_2\Re (\alpha_0V(z)) \sigma_2)
\end{split}
\end{equation}
and $H_{\text{S}}(\mathbf k)$ is the Schr\"odinger operator
\[  H_{\text{S}}(\mathbf k)=H_{\text{Dirac}}(\mathbf k)^2 = \left((D_1+\mu + i y)^2+(D_2+k_2)^2 \right) \operatorname{id}\] 
satisfying 
\begin{equation}
\label{eq:Pauli}
\Vert H_{\text{S}}(\mathbf k)^{-1} \Vert \lesssim \langle y \rangle^{-2}.
\end{equation}

Rearranging terms yields
\begin{equation}
\begin{split}
 &Q_{\mathbf k}(w_0)(Q_{\mathbf k}(w_0) - w_0  \Re(V(z,\bar z)) ) =  (\operatorname{id}+WH_{\text{S}}(\mathbf k)^{-1})H_{\text{S}}(\mathbf k) \\
 &Q_{\mathbf k}(w_0)^*(Q_{\mathbf k}(w_0)^*- w_0\Re( V(z,\bar z)) )=  (\operatorname{id}+\hat{W}H_{\text{S}}(\mathbf k)^{-1})H_{\text{S}}(\mathbf k).
 \end{split}
 \end{equation}

This shows, since the right hand side of \eqref{eq:identity} is invertible, that $Q_{\mathbf k}(w_0)$ is surjective for $y$ large enough. Applying the same argument to $Q_{\mathbf k}(w_0)^*$ and using that $\operatorname{ker}(Q(w_0)) = \operatorname{ran}(Q_{\mathbf k}(w_0)^*)^{\perp}$ yields invertibility of $Q_{\mathbf k}(w_0)$ and thus of $Q_{\mathbf k}(w_0)^*.$

In particular, we can easily write down the inverse of $\mathscr H_{\mathbf k}(w_0)$ as
\[  \mathscr H_{\mathbf k}(w_0)^{-1}= \begin{pmatrix} 0 & (Q_{\mathbf k}(w_0))^{-1} \\ (Q_{\mathbf k}(w_0)^*)^{-1} & 0 \end{pmatrix} . \]
Using operators
\begin{equation}
\begin{split}
\hat{T}&:=(Q_{\mathbf k}(w_0)^* - w_0  \Re(V(z))) H_{\text{S}}(\mathbf k)^{-1} \\
T&:=(Q_{\mathbf k}(w_0) - w_0 \Re(V(z)))H_{\text{S}}(\mathbf k)^{-1} 
\end{split}
\end{equation}
we have
 \begin{equation}
\begin{split}
 \mathscr H_{\mathbf k}(w_0)^{-1} 
 &=\begin{pmatrix} 0 & Q_{\mathbf k}(w_0)^{-1} \\ (Q_{\mathbf k}(w_0)^{*})^{-1} & 0 \end{pmatrix}  \\
  &=\begin{pmatrix} 0 & T \\ \hat{T}  & 0 \end{pmatrix} \left(\operatorname{id} +\begin{pmatrix} \hat{W}H_{\text{S}}(\mathbf k)^{-1} & 0 \\ 0 &  W H_{\text{S}}(\mathbf k)^{-1} \end{pmatrix} \right)^{-1}.
  \end{split}
 \end{equation}
Assuming that there exists a flat band to $\mathscr H_{\mathbf k}$, it follows that 
\[1 \in \Spec(\operatorname{diag}(\hat{W}H_{\text{S}}(\mathbf k)^{-1} ,WH_{\text{S}}(\mathbf k)^{-1}))\] in a complex neighbourhood of $\mathbf k_1 \in \RR$ by Rellich's theorem. 
Then it follows that for all $\mathbf k_1 \in \CC$ 
 
 \[1 \in \Spec(\operatorname{diag}(\hat{W}H_{\text{S}}(\mathbf k)^{-1} ,WH_{\text{S}}(\mathbf k)^{-1})).\]
 But this contradicts the resolvent estimate on $H_{\text{S}}(\mathbf k)^{-1} $ for $\vert y \vert$ large enough.
\end{proof}

\section{Point-localized states and H\"ormander's condition}
\smallsection{Notation}
For an open set $\Omega\subset\mathbb R^n$ we denote by $T^\ast\Omega$ the $2n$-dimensional phase space over $\Omega$. For a function $a$ of phase space coordinates $x,\xi$ satisfying certain regularity we denote by $a^w(x,hD_{x})$ or $a^{w}$ the semiclassical Weyl quantization of $a$, see \cite[Def.\@ $4.1$]{zworski}. We denote the semiclassical wavefront set by $\operatorname{WF}_h$, see \cite[Sec.\@ $8.4$]{zworski} for details. The principal symbol of a pseudodifferential operator $A$ is denoted by $\sigma_0(A)$, see \cite[p.\@ $213$]{zworski}.

\bigskip

\subsection{Absence of point-localized states in anti-chiral model}

We shall now establish the absence of point-localized states in the anti-chiral model. In view of \eqref{eq:Q}--\eqref{eq:Q2} is suffices to study quasimodes for $Q_\mathbf{k}(w_0)$, which as before is the operator \eqref{eq:Q2} after applying the Floquet transform. It is then natural to consider $h=1/w_0$ as a semiclassical parameter, so that
\begin{equation*}
Q_\mathbf{k}(w_0)=h^{-1}\widetilde Q_\mathbf{k}(h),
\quad 
\widetilde Q_\mathbf{k}(h):=\begin{pmatrix}
V(z, \bar z)  & 2 hD_z-h\bar{\mathbf k}  \\
2 hD_{\bar z}-h\mathbf k& \overline{V(z, \bar z)}
\end{pmatrix} .
\end{equation*}
Introduce the set 
\[\mathcal A:=\left\{ \tfrac{4\pi(3m\pm1)}{3 \sqrt{3}}+i \tfrac{2\pi n}{3}; n \in 2\ZZ, m \in \ZZ\right\} \cup \left\{ \tfrac{2\pi(6m\pm1)}{3 \sqrt{3}}+i\tfrac{2\pi n}{3}; n \in 2\ZZ+1, m \in \ZZ\right\}\]
and observe that
$$
V(z,\bar z)
=e^{i\Im z}\bigg(1+2\cos\bigg(\frac{\sqrt 3}2 \Re z\bigg)e^{-\frac32 i\Im z}\bigg)=0\quad\Longleftrightarrow\quad z\in\mathcal A.
$$

Let $\Omega\subset \CC\setminus\mathcal A$ be open and uniformly bounded away from $\mathcal A$. We shall use complex notation $\zeta=\frac12(\xi_1-i\xi_2)$ and $z=x_1+ix_2$ in $T^\ast \Omega$.
Let $z_0\in\Omega$ and suppose that $\mathbf u_h=(u_h,v_h)\in C^\infty(\CC/\Gamma;\CC^2)$ is a quasimode of $\widetilde Q_\mathbf{k}(h)$ with $\operatorname{WF}_h(u_h)=\operatorname{WF}_h(v_h)=\{(z_0,\zeta_0)\}$ such that $\widetilde Q_\mathbf{k}(h)\mathbf u_h=O(h^\infty)$ as $h\to0$. Then also $Q_\mathbf{k}(w_0)\mathbf u_h=O(h^\infty)$ as $w_0=1/h\to\infty$. Moreover, it is easy to see that $P(h)v_h=O(h^\infty)$ for the scalar operator
\begin{align*}
P(h)v_h(z)=V(z,\bar z)(2hD_{\bar z}-h\mathbf k)\big[V(z,\bar z)^{-1}(2hD_z-h\bar{\mathbf{k}}) v_h(z)\big]-\lvert V(z,\bar z)\rvert^2v_h(z);
\end{align*}
compare with the proof of Lemma \ref{lemm:equivalence} below. Since $\operatorname{WF}_h(v_h)= \{(z_0,\zeta_0)\}$, this means that a quasimode $\mathbf u_h$ of $Q_\mathbf k(w_0)$, should one exist, would give rise to a quasimode $v_h$ of $P(h)$ microlocalized in $T^\ast\Omega$. This is not possible, by virtue of the following result.


\begin{theo}
\label{theo:Jens}
If $u(h)\in C^\infty(\CC/\Gamma;\CC^2)$ 
with $\operatorname{WF}_h(u(h))=\{(z_0,\zeta_0)\}$ contained in $T^\ast\Omega$, then
\begin{equation}\label{eq:PTestimate}
\lVert u(h)\rVert\le\frac{C}{h}\lVert P(h) u(h)\rVert,\quad h\to 0.
\end{equation}
In particular, there are no quasimodes of the anti-chiral model $w_0\ne0$, $\varphi=w_1=0$, microlocalized at $(z_0,\zeta_0)\in T^\ast\Omega$.
\end{theo}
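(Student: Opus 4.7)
The plan is to treat $P(h)$ as a scalar semiclassical pseudodifferential operator on $\CC/\Gamma$ and exploit the crucial observation that its principal symbol is \emph{real-valued}. Writing $\mathbf{k}=k_1+ik_2$ with $k_j\in\RR$ and $2\zeta=\xi_1-i\xi_2$, a direct computation in which the multiplications by $V$ and $V^{-1}$ cancel at top order gives
$$
p(x,\xi) \;=\; V(2\bar\zeta-\mathbf{k})\,V^{-1}(2\zeta-\bar{\mathbf{k}}) - |V(z)|^2 \;=\; (\xi_1-k_1)^2+(\xi_2-k_2)^2 - |V(x)|^2,
$$
so $P(h)$ is of real principal type; the commutator corrections from $[2hD_{\bar z},V^{-1}]$ and $[2hD_z,V]$ contribute only to the order-$h$ subprincipal symbol. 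Note that, in contrast to the chiral situation of Fig.~\ref{f:Horm}, the bracket $\{p,\bar p\}$ vanishes identically, so point-localized quasimodes cannot be manufactured via a Hörmander-type bracket mechanism and must instead be ruled out by direct propagation arguments.

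I would then compute the Hamilton vector field
$$
H_p \;=\; 2(\xi_1-k_1)\partial_{x_1}+2(\xi_2-k_2)\partial_{x_2}+\partial_{x_1}|V|^2\,\partial_{\xi_1}+\partial_{x_2}|V|^2\,\partial_{\xi_2}
$$
and verify that $H_p$ is nonvanishing on $p^{-1}(0)\cap T^\ast\Omega$. Indeed $H_p(x_0,\xi_0)=0$ forces $\xi_0=k$ and $\nabla|V|^2(x_0)=0$; coupled with $p(x_0,\xi_0)=0$ this yields $|V(x_0)|=0$, i.e.\ $x_0\in\mathcal{A}$, which is excluded because $\Omega$ is uniformly bounded away from $\mathcal{A}$. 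Hence $P(h)$ is of real principal type on $T^\ast\Omega$ with a nondegenerate Hamilton flow on its characteristic variety.

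With these structural facts, estimate \eqref{eq:PTestimate} follows from standard microlocal machinery. Pick a semiclassical pseudodifferential cutoff $A$ equal to $1$ microlocally near $(z_0,\zeta_0)$; since $\operatorname{WF}_h(u(h))=\{(z_0,\zeta_0)\}$, we have $\|u\|=\|Au\|+O(h^\infty)\|u\|$. If $p(z_0,\zeta_0)\ne 0$, microlocal ellipticity already gives $\|Au\|\le C\|Pu\|+O(h^\infty)\|u\|$, which is stronger than \eqref{eq:PTestimate}. Otherwise $p(z_0,\zeta_0)=0$ and I would invoke the sharp semiclassical propagation-of-singularities estimate for real principal type (positive-commutator argument, cf.\ Chapter 12 of \cite{zworski}), which yields
$$
\|Au\|\;\le\;\frac{C}{h}\|Pu\|+\|B u\|+O(h^\infty)\|u\|,
$$
where $B$ is microlocalized at a point on the Hamilton trajectory through $(z_0,\zeta_0)$ distinct from $(z_0,\zeta_0)$ itself. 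Because $\operatorname{WF}_h(u)$ consists of that single point, $\|Bu\|=O(h^\infty)\|u\|$, and absorbing into the left-hand side yields \eqref{eq:PTestimate}. For the second assertion, an anti-chiral quasimode $\mathbf{u}_h=(u_h,v_h)$ satisfies the scalar equation $P(h)v_h=O(h^\infty)$, so the estimate forces $v_h=O(h^\infty)$, and the first component equation of $\widetilde{Q}_{\mathbf{k}}\mathbf{u}_h=O(h^\infty)$ then forces $u_h=O(h^\infty)$, contradicting the nonemptiness of $\operatorname{WF}_h(\mathbf{u}_h)$.

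The main obstacle will be checking that the propagation estimate delivers \emph{exactly} the advertised loss of a single factor of $h$ rather than some larger power $h^{-N}$. This is the sharp loss for a real-principal-type operator, and confirming it requires precise symbol bookkeeping in the positive-commutator construction, together with verifying that the subprincipal symbol of $P(h)$---nonzero because $P(h)^\ast\ne P(h)$ although their principal symbols agree---contributes only a zeroth-order correction that can be absorbed without degrading the $1/h$ bound.
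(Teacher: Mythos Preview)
Your approach is essentially the same as the paper's: compute the real principal symbol, verify real principal type on $T^\ast\Omega$, and invoke the $1/h$-loss estimate from Chapter~12 of \cite{zworski} (the paper cites \cite[Theorem~12.4]{zworski} directly rather than unpacking the propagation argument with cutoffs $A,B$ as you do, but the content is identical). One small slip: the terms $-h\mathbf{k}$ and $-h\bar{\mathbf{k}}$ are $O(h)$ and hence subprincipal, so the principal symbol is $p=\lvert 2\zeta\rvert^2-\lvert V\rvert^2=\xi_1^2+\xi_2^2-\lvert V\rvert^2$ rather than $(\xi_1-k_1)^2+(\xi_2-k_2)^2-\lvert V\rvert^2$; this does not affect your argument, and in fact makes the nonvanishing of $H_p$ on $p^{-1}(0)\cap T^\ast\Omega$ more immediate (since $p=0$ forces $\xi\ne0$ directly, without needing $\nabla\lvert V\rvert^2=0$).
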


\begin{proof}
In view of the previous discussion, we only need to verify \eqref{eq:PTestimate}.
Let $P(z,\bar z,\zeta,\bar \zeta)$ be the symbol of $P(h)=P^w(z,\bar z,hD_z,hD_{\bar z})$. It is easy to see that $P$ has the regularity of belonging to the symbol class $S(m)$, where $m(z,\bar z,\zeta,\bar \zeta)=1+\lvert z\rvert^2+\lvert\zeta\rvert^2$. (See \cite[Ch.\@ 4]{zworski} for details.) The principal symbol of $P(h)$ is 
$$
p(z,\bar z,\zeta,\bar\zeta)
=\lvert 2\zeta\rvert^2-\lvert  V(z,\bar z)\rvert^2.
$$
We may assume that $p(z_0,\bar z_0,\zeta_0,\bar \zeta_0)=0$, since otherwise the result follows by ellipticity. Note that
$$
p(z,\bar z,\zeta,\bar\zeta)=\lvert 2\zeta\rvert^2-\lvert V(z,\bar z)\rvert^2=0\quad\Longrightarrow\quad \zeta\ne0 
$$
in $T^\ast\Omega$, so $P(h)$ is of real principal type. In particular, the Hamilton vector field $H_p$ is non-vanishing and not colinear with the radial vector field $\xi\cdot\partial_\xi$ at $(z_0,\zeta_0)$. 
Thus, \eqref{eq:PTestimate} follows from \cite[Theorem 12.4]{zworski}. (Actually, \cite[Theorem 12.4]{zworski} concerns operators on $\mathbb R^n$, but a similar result holds for operators on $\CC/\Gamma$. This can be verified by introducing rectangular coordinates $ z = x_1 + i x_2 =  2i \omega y_1 + 2 i \omega^2 y_2 $, and identifying $P(h)$ with the resulting operator in $(y_1,y_2)$ belonging to the two-dimensional torus; see, e.g.,~\cite[Ch.~5.3]{zworski}.) 
This completes the proof.
\end{proof}

\subsection{The general continuum model}
We shall now look into a method that applies to the general continuum model, which is given by a Hamiltonian that is not of block off-diagonal form.
In particular, as we shall show, H\"ormander's condition does not apply to the general continuum model, and the bracket almost always vanishes. The purpose of this subsection is therefore to outline a general reduction strategy that reduces the $4\times 4$ Hamiltonian to a $2\times 2$ system, and then to show that the Poisson bracket of the determinant of that system always vanishes.

The following Lemma connects the existence of quasimodes to $2n \times 2n$-matrix valued systems to the existence of quasimodes to $n \times n$ matrix-valued systems.
\begin{lemm}\label{lemm:equivalence}
Let $A$ be a self-adjoint matrix-valued symbol $A \in \RR^{2n \times 2n}$ for some $n \in \mathbb N$ 
\begin{equation}
A(x,\xi)= \begin{pmatrix} A_{11}(x,\xi) & A_{12}(x,\xi) \\ A_{12}(x,\xi)^*&  A_{22}(x,\xi),
\end{pmatrix}
\end{equation}
where $A_{12}(x,\xi)$ is invertible away from a set $\mathcal A.$ 
Thus, if for some $\varphi= (\varphi_1, \varphi_2)^t$ we have $\operatorname{WF}_h(\varphi)=\{(x_0,\xi_0)\} \notin \mathcal A$, then 
\[A^w(x,hD_x)\varphi=\mathcal O(h^{\infty})\text{ is equivalent to }F^w(x,hD_x) \varphi_1 =\mathcal O(h^{\infty})\]
where 
\begin{equation}
\label{eq:F}
F^w(x,hD_x) :=A_{12}^w(x,hD_x)^*-A_{22}^w(x,hD_x)
A_{12}^w(x,hD_x)^{-1} A_{11}^w(x,hD_x).
\end{equation}
\end{lemm}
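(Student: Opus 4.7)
The plan is to perform a microlocal Schur complement elimination. The hypothesis $(x_0,\xi_0)\notin\mathcal A$ says exactly that the principal symbol $A_{12}$ is invertible there, so $A_{12}^w$ is microlocally elliptic in a conic neighborhood of $(x_0,\xi_0)$. By a standard parametrix construction (see \cite[\S 4.4]{zworski}), I produce a pseudodifferential operator $B^w$, which I will write suggestively as $(A_{12}^w)^{-1}$, satisfying $B^w A_{12}^w = I + R_1$ and $A_{12}^w B^w = I + R_2$, with the remainders $R_j$ having wavefront sets disjoint from a neighborhood of $(x_0,\xi_0)$. In particular $R_j u = \mathcal O(h^\infty)$ whenever $\WF_h(u)\subset\{(x_0,\xi_0)\}$, and pseudolocality of pseudodifferential operators will keep every intermediate distribution appearing below microlocalized at $(x_0,\xi_0)$.

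Next, I introduce the microlocal block-triangular operator
\[
E^w := \begin{pmatrix} I & 0 \\ -A_{22}^w (A_{12}^w)^{-1} & I \end{pmatrix},
\]
which is microlocally invertible near $(x_0,\xi_0)$ by reversing the sign of its off-diagonal entry. A direct computation---using the parametrix identity $(A_{12}^w)^{-1}A_{12}^w\varphi_2 = \varphi_2 + \mathcal O(h^\infty)$ to cancel the two $A_{22}^w\varphi_2$ contributions in the lower block---yields
\[
E^w A^w \varphi = \begin{pmatrix} A_{11}^w \varphi_1 + A_{12}^w \varphi_2 \\ F^w \varphi_1 \end{pmatrix} + \mathcal O(h^\infty),
\]
with $F^w$ as in \eqref{eq:F}. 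The forward direction is then immediate: if $A^w\varphi = \mathcal O(h^\infty)$, then $E^w A^w\varphi = \mathcal O(h^\infty)$, and reading off the bottom slot gives $F^w \varphi_1 = \mathcal O(h^\infty)$.

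For the converse, I read the equivalence at the level of quasimodes microlocalized at $(x_0,\xi_0)$: starting from any $\varphi_1$ with $F^w \varphi_1 = \mathcal O(h^\infty)$ and $\WF_h(\varphi_1)\subset\{(x_0,\xi_0)\}$, I set $\varphi_2 := -(A_{12}^w)^{-1} A_{11}^w \varphi_1$. Pseudolocality gives $\WF_h(\varphi_2)\subset\{(x_0,\xi_0)\}$, and by construction $A_{11}^w \varphi_1 + A_{12}^w \varphi_2 = \mathcal O(h^\infty)$. Combined with $F^w\varphi_1 = \mathcal O(h^\infty)$, both rows of $E^w A^w \varphi$ vanish to $\mathcal O(h^\infty)$, and microlocal invertibility of $E^w$ returns $A^w \varphi = \mathcal O(h^\infty)$. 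This is the form in which the lemma is actually used in the proof of Theorem~\ref{theo:Jens}, where the scalar reduction $P(h)v_h = \mathcal O(h^\infty)$ is extracted from a vector quasimode $\mathbf u_h = (u_h,v_h)$ of $\widetilde Q_\mathbf{k}(h)$.

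The main obstacle is purely technical bookkeeping: I must verify that every composition appearing in $F^w$ and in the Schur computation is well defined as a properly supported pseudodifferential operator near $(x_0,\xi_0)$, and that all parametrix errors contribute only $\mathcal O(h^\infty)$ on the microlocalized distributions involved. Both reduce to standard pseudolocal and microlocal elliptic regularity, available in \cite[Ch.~4]{zworski}, so no new analytic input is required beyond the ellipticity of $A_{12}$ at $(x_0,\xi_0)$.
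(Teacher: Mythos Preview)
Your argument is correct and follows essentially the same idea as the paper: use the microlocal ellipticity of $A_{12}^w$ at $(x_0,\xi_0)$ to eliminate $\varphi_2$ from the first row and substitute into the second. The paper does this by direct substitution---solving $A_{11}^w\varphi_1+A_{12}^w\varphi_2=\mathcal O(h^\infty)$ for $\varphi_2$ and plugging into the second equation---whereas you package the same elimination as left multiplication by the block-triangular operator $E^w$. Your version is in fact more complete: the paper's proof only establishes the forward implication $A^w\varphi=\mathcal O(h^\infty)\Rightarrow F^w\varphi_1=\mathcal O(h^\infty)$, while you also supply the converse by constructing $\varphi_2:=-(A_{12}^w)^{-1}A_{11}^w\varphi_1$ from a given quasimode $\varphi_1$ of $F^w$, which is indeed the correct reading of the ``equivalence'' claimed in the statement.
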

\begin{proof}
Let $\varphi = (\varphi_1,\varphi_2)^t$ with $\operatorname{WF}_h(\varphi)=\{(x_0,\xi_0)\}$  be a quasimode for some $(x_0,\xi_0) \notin \mathcal A$ satisfying 
\[ A^w(x,hD_x)\varphi=\mathcal O(h^{\infty}), \]
then \begin{equation}
\begin{split}
\label{eq:matrix-comp}
&A^w_{11}(x,hD_x)\varphi_1 + A^w_{12}(x,hD_x) \varphi_2 = \mathcal O(h^{\infty})\text{ and }\\
&A^w_{12}(x,hD_x)^*\varphi_1+  A^w_{22}(x,hD_x) \varphi_2 =\mathcal O(h^{\infty}).
\end{split}
\end{equation}
By the microlocal property of pseudodifferential operators \cite[Thm.~$8.15$]{zworski} it then follows that $\operatorname{WF}_h(A_{11}^w(x,hD_x)\varphi_1) \subset \{(x_0,\xi_0)\}.$ If $(x_0,\xi_0) \notin \mathcal A$, then $\operatorname{det}(A^w_{12}(x_0,\xi_0)) \neq 0$ and we can microlocally invert $A^w_{12}(x,hD_x)^{-1}$ such that 
\[ \varphi_2=-A_{12}^w(x,hD_x)^{-1} A_{11}^w(x,hD_x)\varphi_1 +\mathcal O(h^{\infty}).\]
Inserting this into the second equation in \eqref{eq:matrix-comp} we find that 

\[F^w(x,hD_x) \varphi_1 =\mathcal O(h^{\infty}).\]

\end{proof}

The above reduction from $A$ to the lower-dimensional symbol $F$ has an interesting consequence for the Poisson bracket of non-degenerate eigenvalues of $F$, if $A$ has real determinant. In fact, it shows that we cannot use the non-self-adjointness of $F$ to construct quasimodes to $F$ (and thus to $A^w(x,hD_x)$ by the previous Lemma) using H\"ormander's condition applied to simple zero eigenvalues of $\sigma_0(F).$ In particular, the following corollary therefore implies that the Poisson bracket will vanish at most critical points of the generalized continuum model.

\begin{corr}
\label{corr:simple}
Let $(x,\xi)$ be a point such that all eigenvalues $\lambda_1,\ldots,\lambda_n$ of $\sigma_0(F(x,\xi))$ are smooth and  $\det(A)$ is real  in a neighbourhood of $(x,\xi)$. If all eigenvalues apart from $\lambda_1$ are non-zero at $(x,\xi)$, then
\[ \left\{ \lambda_1, \overline{\lambda_1} \right\}(x,\xi)=0.\]
\end{corr}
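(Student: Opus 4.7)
The plan is to express $\lambda_1$ near $(x,\xi)$ as a ratio of a real-valued function over a smooth nonvanishing function, using the block Schur complement, and then observe that any Poisson bracket of the form $\{r/g, r/\bar g\}$ with $r$ real must vanish wherever $r$ vanishes.

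First I would invoke the Schur complement identity. Since $F = A_{12}^* - A_{22} A_{12}^{-1} A_{11}$ already presupposes invertibility of $A_{12}$ in a neighborhood of $(x,\xi)$, swapping block columns and row-reducing gives
\begin{equation*}
\det(A) = (-1)^n \det(A_{12}) \det(F)
\end{equation*}
throughout that neighborhood. Since the eigenvalues of $\sigma_0(F)$ are smooth there, one has $\det(F) = \prod_{j=1}^n \lambda_j$ as an identity of smooth functions. The hypothesis $\lambda_j(x,\xi) \neq 0$ for $j \geq 2$ guarantees that $g := \det(A_{12}) \prod_{j \geq 2} \lambda_j$ is smooth and nonvanishing near $(x,\xi)$, while $r := (-1)^n \det(A)$ is real-valued by assumption. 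Hence
\begin{equation*}
\lambda_1 = \frac{r}{g}.
\end{equation*}
Under the implicit simple-zero hypothesis from the preceding discussion, $\lambda_1(x,\xi) = 0$, so also $r(x,\xi) = 0$.

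Next, since $r$ is real, $\overline{\lambda_1} = r/\bar g$, and applying the Leibniz rule twice together with $\{r,r\} = 0$ yields
\begin{equation*}
\{\lambda_1, \overline{\lambda_1}\} = \{r/g,\, r/\bar g\} = \frac{r^2}{g^2 \bar g^2}\{g,\bar g\} - \frac{r}{g^2 \bar g}\{g,r\} - \frac{r}{g \bar g^2}\{r,\bar g\}.
\end{equation*}
Every term carries a factor of $r$, so evaluation at $(x,\xi)$ produces $\{\lambda_1, \overline{\lambda_1}\}(x,\xi) = 0$.

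The proof is essentially algebraic, and the only delicate point is verifying that $\lambda_1$ admits the global representation $r/g$ on a full neighborhood; this reduces to the smoothness of the eigenvalues and the simplicity assumption, which keeps $\prod_{j \geq 2}\lambda_j$ nonvanishing. The conceptual content is that reality of $\det(A)$ combined with simplicity of the zero eigenvalue forces $\lambda_1$ to be locally proportional, via a smooth nonvanishing complex factor, to a single real function; such a function can never satisfy H\"ormander's bracket condition at its zero set, which is exactly what blocks the quasimode construction for the general continuum model.
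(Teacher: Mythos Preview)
Your proof is correct and follows essentially the same route as the paper: both arguments rest on the Schur complement identity $\det(A)=\pm\det(A_{12})\det(\sigma_0(F))$ and the observation that, at the zero of $\lambda_1$, the Leibniz rule reduces $\{\lambda_1,\overline{\lambda_1}\}$ to a nonzero multiple of $\{\det(A),\overline{\det(A)}\}=\{\det(A),\det(A)\}=0$. Your presentation packages this as $\lambda_1=r/g$ with $r$ real and expands $\{r/g,r/\bar g\}$ explicitly, whereas the paper writes the two-step chain $\{\lambda_1,\overline{\lambda_1}\}\to\{\det\sigma_0(F),\overline{\det\sigma_0(F)}\}\to\{\det A,\det A\}$; the content is identical, and you are in fact slightly more careful in tracking the sign $(-1)^n$ and in spelling out why the cross terms vanish.
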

\begin{proof}

For the principal symbol of $F$ we then have for $(x,\xi)$, where $\operatorname{det}(A^{}(x,\xi)) \neq 0,$
\begin{equation}
\begin{split}
\label{eq:blockform}
\operatorname{det}(A^{}(x,\xi))&=\operatorname{det}( A_{12}(x,\xi)) \operatorname{det}( \sigma_0(F(x,\xi))).
\end{split}
\end{equation}

Let $(x,\xi)$ be a point such that all eigenvalues $\lambda_1,\ldots,\lambda_n$ of $\sigma_0(F(x,\xi))$ are smooth in a neighbourhood of $(x,\xi)$ and all but $\lambda_1$ non-zero at $(x,\xi)$. Then 
\begin{equation}
\begin{split}
\left\{ \lambda_1, \overline{\lambda_1} \right\}(x,\xi) &= \frac{\left\{\operatorname{det}(\sigma_0(F)), \overline{\operatorname{det}(\sigma_0(F))} \right\}(x,\xi)}{\prod_{i=2}^n \vert \lambda_i(x,\xi)\vert^2} \\[.5em]
&= \frac{\left\{\operatorname{det}(A),\operatorname{det}(A) \right\}(x,\xi)}{\vert \operatorname{det}(A_{12}(x,\xi)) \vert^2 \prod_{i=2}^n \vert \lambda_i(x,\xi)\vert^2 }=0.
\end{split}
\end{equation}
\end{proof}

\end{document}